\title[]{An Active Parameter Learning Approach to \\ The Identification of Safe Regions}
\author{%
 \Name{Aneesh Raghavan} \Email{aneesh@kth.se}\\
 \addr DCS Division, KTH, Royal Institute of Technology, Stockholm
 \AND
 \Name{Karl {Henrik Johansson}} \Email{kallej@kth.se}\\
 \addr DCS Division, KTH, Royal Institute of Technology, Stockholm
}
\begin{document}
\maketitle
\vspace{-1.4cm}
\begin{abstract}%
We consider the problem of identification of safe regions in the environment of an autonomous system. The  environment is divided into a finite collections of Voronoi cells, with each cell having a representative, the Voronoi center. The extent to which each region is considered to be safe by an oracle is captured through a trust distribution. The trust placed by the oracle conditioned on the region is modeled through a Bernoulli distribution whose the parameter depends on the region. The parameters are unknown to the system. However, if the agent were to visit a given region, it will receive a binary valued random response from the oracle on whether the oracle trusts the region or not. The objective is to design a path for the agent where, by traversing through the centers of the cells, the agent is eventually able to label each cell safe or unsafe. To this end, we formulate an active parameter learning problem with the objective of minimizing visits or stays in potentially unsafe regions.  The active learning problem is formulated as a finite horizon stochastic control problem where the cost function is derived utilizing the  large deviations principle (LDP). The challenges associated with a dynamic programming approach to solve the problem are analyzed. Subsequently, the optimization problem is relaxed to obtain single-step optimization problems for which closed form solution is obtained. Using the solution, we propose an algorithm for the active learning of the parameters. A relationship between the trust distributions and the label of a cell is defined and subsequently a classification algorithm is proposed to identify the safe regions. We prove that the algorithm identifies the safe regions with finite number of visits to unsafe regions. We demonstrate the algorithm through an example.
\end{abstract}
\begin{keywords}%
Active Learning, Safety Critical Systems, Large Deviations Principle 
\end{keywords}
\section{Introduction}
\subsection{Motivation}
Safety critical systems are those systems whose failure leads to loss of life, infrastructure damages and or damages to environment, \cite{knight2002safety}. Study of safety critical robotic systems has received significant attention in the literature, for e.g. see \cite{goble2010control}, \cite{fan2020bayesian}, \cite {li2024geometry}. Safe exploration using reinforcement learning \citep{moldovan2012safe,sui2015safe,dalal2018safe}  and model predictive control \citep{koller2018learning} involve safety constraints either in the dynamics of the system or the state space of the system. However identification of the safe regions in the state space or the environment of the agent has received little attention in the literature. 

For example, consider a surveillance problem where the objective of an UAV is identify safe regions to land in a new environment, \cite{patterson2014timely}. Apriori, the UAV does not posses the required knowledge. However, the UAV could scan specific regions of the new environment and transmit the sensor information to a human, who could respond to the UAV letting know the if the human trusts the particular region or not. Through repeated scanning, the UAV could develop a trust distribution for each region. Based on the trust distribution, the UAV could classify if a region is safe or not. One of the main question that arises is, what path should the UAV follow while scanning the regions and developing trust distributions. The UAV could chose a random path, however it would be ideal to design a path that optimizes its learning of the trust distributions and visits the unknown unsafe regions minimally. 

Adaptive sampling for classification has been studied in \citep{djouzi2022new,singh2017sequential,shekhar2021adaptive},  where sequential sampling algorithms are presented to enhance the learning process.  In \cite{ding2021adaptive}, adaptive sampling has been applied to hyperspectral image classification leading to improvement from state of the art. Learning unknown environment is a crucial part of marine robotics.  Adaptive sampling methods have been used to survey and learn about algal bloom,  water quality models, etc, for e.g. see \cite{zhang2007adaptive}, \cite{bernstein2013learning}, \cite{stankiewicz2021adaptive}, and \cite{fossum2020compact}.   However, the perspective of safety has not been considered, i.e, most of those algorithms are not suitable for the problem being considered. 
\vspace{-0.5cm}
\subsection{Problem Considered}
The problem setup is as follows. We consider the environment of an autonomous system (agent), for e.g. a octagonal subset of 2D Euclidean space as depicted in Figure \ref{Figure 2}. The environment is partitioned into a finite collection of Voronoi cells (regions). Each cell is represented using its center. In Figure \ref{Figure 2}, the environment is partitioned into $9$ cells where the center of each cell is represented by blue dots. A path for the agent is a sequence of states of the agent, where each state is one of the centers of the Voronoi cells. The path of the agent is modeled through a Markov decision process (MDP). At a given state, the observation collected by the agent is $\{0,1\}$ valued random variable where $1$ signifies that the oracle trust the region and $0$ signifies that the oracle does not trust the region. Using the observations, the parameters, i.e, the probability that the observation is $1$ given the cell is $j$ (depicted as $p_j$ in Figure \ref{Figure 2}) is to be estimated for all $j$. 
\begin{wrapfigure}{R}{0.5\textwidth}
\vspace{-0.8cm}
  \begin{center}
    \includegraphics[width=0.48\textwidth]{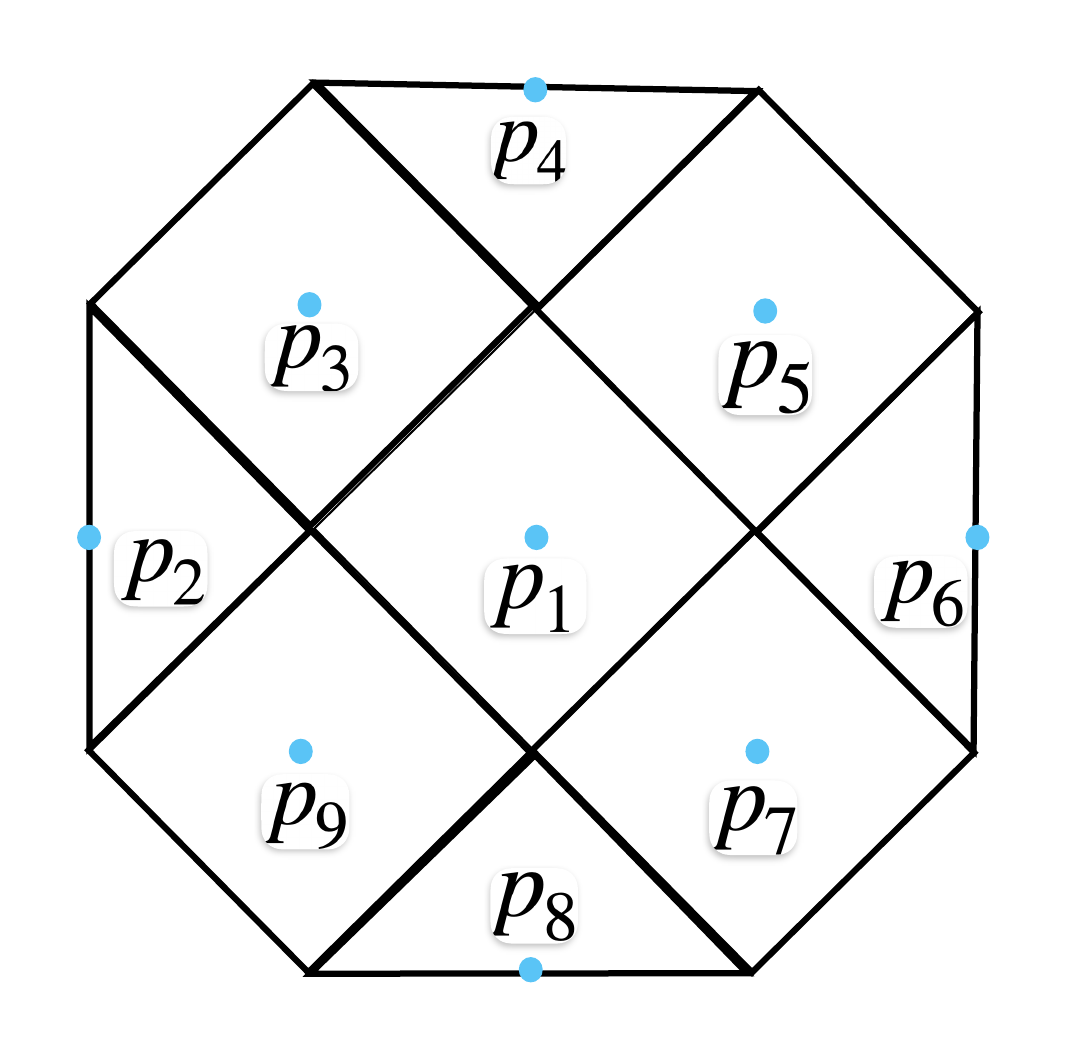}
  \end{center}
  \vspace{-1cm}
  \caption{Voronoi Diagram for the Environment of the Autonomous Agent}
  \label{Figure 2}
\vspace{-1.5cm}
\end{wrapfigure}

The objectives are as follows: (i) to design a path for the agent so that the desired probabilities are estimated quickly while the potentially unsafe regions are visited minimally. (ii)  To define the relationship between safety and estimated probabilities and identify which cells are safe.  
\vspace{-0.5cm}
\subsection{Contributions}
The contributions are as follows. We define asymptotic independence of the observations collected by the agent and utilize the same to characterize the rate of decay of the probability of error in the estimation of the mean probability of the random observation being $1$, where the expectation is with respect to the stationary distribution of the MDP.  The rate of decay is characterized using the large deviations principle, \cite{touchette2009large,varadhan2016large}. The path planning problem is formulated as a finite horizon stochastic control problem whose cost function is the estimated probability of error in the mean of the parameters. The dynamic programming solution is analyzed and due to its computational inefficiency, we relax the original problem to one step optimization problems. Using the concavity of the cost function, we find closed form expressions for the optimal solution of the one step optimization problems. Using the solutions, we propose an active parameter  learning algorithm. Using the median of the estimates of the parameters, we define the safety of the cells and a present classification algorithm to label the cells. 
\vspace{-0.5cm}
\subsection{Outline}
The outline of the paper is as follows. In the next section, we describe the problem setup and define the active parameter learning problem. In Section \ref{Analysis of the Problem}, we analyze the formulated problem and define its relaxation which is computationally tractable. In Section \ref{Algorithm}, we present the learning and the classification algorithm using the solution of the relaxed problem. We present  a numerical example in Section \ref{Example} and conclude with some comments and directions for future work in \ref{Conclusion and Future Work}.
\vspace{-0.5cm}
\section{Problem Formulation}
In this section, we describe the problem setup and formulate the active parameter learning problem as a stochastic control problem. 
\vspace{-0.5cm}
\subsection{Problem Setup}\label{Problem Setup}
We consider the euclidean space, $\mathbb{R}^d, d=2 ,3$. The space is divided into a finite collection of Voronoi cells. Given the centers  of these Vornoi regions, $\mathcal{X} = \{\bar{x}_{j}\}^{m}_{j=1}$, the Vornoi regions are then defined as, 
\begin{align*}
E_{j} = \{x \in \mathbb{R}^d : || x - \bar{x}_{j} || <  || x - \bar{x}_{k} ||, \; \forall k \in \{1, \ldots , m\}, k \neq j \}
\end{align*}
The state of the agent at time instant $n$ is denoted by, $X_n$, $X_{n}(\omega) \in \{\bar{x}_{j}\}^{m}_{j=1}, \; \forall n \in \mathbb{N}$. The observation collected by the agent at state $X_n$ is $Y_n$, where $Y_{n}(\omega) \in \{0,1\}, \; \forall n \in \mathbb{N}$. 

For this experiment, the sample space for the agent at stage $n$ is $\Omega_{n} = \{\mathcal{X} \times \{0,1\}\}^n$. The $\sigma$ algebra of events is $\mathcal{F}_{n} = 2^{\Omega_n}$. Let $\mathcal{D}(\mathcal{X})$ denote the space of probability distributions on $\mathcal{X}$. The decision policy of the agent at $n$ is $\pi_{n}(\cdot): \{\mathcal{X} \times \{0,1\} \} \to \mathcal{D}(\mathcal{X})$. $\pi_{n}(x_{n}, y_{n})[\bar{x}_j]$ is the probability that $X_{n+1}= \bar{x}_j$ given that $X_{n}= x_{n}$ and $Y_{n}= y_{n}$. That is, we assume that the decision of the agent at instant $n$ depends only on the current state and observation rather than the entire past. The conditional distribution of $Y_{n}$ given $\{X_{j} = x_j\}^{n}_{j=1}, \{Y_{j} = y_j\}^{n}_{j=1} $ is defined as, 
\begin{align*}
\pi(Y_{n} = 1 | \{X_{j} = x_j\}^{n}_{j=1}, \{Y_{j} = y_j\}^{n}_{j=1}  ) = p , \pi(Y_{n} = 0 | \{X_{j} = x_j\}^{n}_{j=1}, \{Y_{j} = y_j\}^{n}_{j=1}  ) = 1-p , 
\end{align*}
where $p$ is unknown. Thus, the probability distribution at stage $n$, $\mathbb{P}_{n}$ is 
\begin{align*}
\mathbb{P}_{n}(\{x_j, y_j\}^n_{j=1}) = \prod^n_{j=1}\Big(\mathbf{1}_{\{y_{j}=1\}}p + \mathbf{1}_{\{y_{j}=0\}}(1-p)\Big)\Big(\pi_{j-1}(x_{j-1},y_{j-1})[x_{j}]\Big),
\end{align*}
where $\pi_{0}(x_0,y_0)$ with some abuse of notation is the distribution of the initial state of the agent which is assumed be to known. The above product measure is defined in the sense of \textit{Carathéodory-Hahn Theorem}, i.e., a suitable premeasure is defined on the semiring of rectangle subsets of $\Omega_n$. Then it is extended to a measure on the sigma algebra $\mathcal{F}_n$. Thus, the probability space at stage $n$ is $(\Omega_n, \mathcal{F}_n, \mathbb{P}_n)$. By the \textit{Kolmogorov Extension Theorem}, the sequence of consistent measures, $\{\mathbb{P}_{n}\}$, can be extended to $\mathbb{P}$ on $(\Omega, \mathcal{F})$, where $\Omega = \{\mathcal{X} \times \{0,1\}\}^{\mathbb{N}}$ and $F$ is the $\sigma$ algebra of cylindrical subsets of $\Omega$. The objective of the agent is to estimate $p$. 

The best estimate of $p$, given $\{Y_{j}\}^{n}_{j=1} = \{y_{j}\}^{n}_{j=1} $, is $\hat{p}_{n} = \frac{\sum^{n}_{j=1}y_j}{n}$. By the law of large numbers,
\begin{align*}
\underset{n \to \infty}{\lim}  \frac{\sum^{n}_{j=1}Y_j}{n} = \underset{n \to \infty}{\lim} \hat{p}_{n} = p,\; \mathbb{P}\; \text{a.s}
\end{align*}
To quantify the rate of convergence, we invoke the LDP, (subsection \ref{Large Deviations Principle}). To invoke the same, a closed form expression of the joint distribution of $\{Y_{j}\}^{n}_{j=1}$ is needed. First, the joint distribution of $Y_{n}$ and $Y_{n+1}$ can be found as:
\begin{align*}
\mathbb{P}(Y_{n+1} =y_{n+1}, Y_{n} =y_{n}) &= \sum^{m}_{j=1}\mathbb{P}(Y_{n+1} =y_{n+1}, Y_{n} =y_{n},X_{n+1}= x_j)\\
&= \sum^{m}_{j=1}\mathbb{P}(Y_{n+1} =y_{n+1}| Y_{n} =y_{n},X_{n+1}= x_j) \mathbb{P}(Y_{n} =y_{n},X_{n+1}= x_j)\\
&= \Big(\mathbf{1}_{\{y_{n+1}=1\}}p + \mathbf{1}_{\{y_{n+1}=0\}}(1-p)\Big)\sum^{m}_{j=1}\mathbb{P}(Y_{n} =y_{n},X_{n+1}= x_j)\\
&= \Big(\mathbf{1}_{\{y_{n+1}=1\}}p + \mathbf{1}_{\{y_{n+1}=0\}}(1-p)\Big)\mathbb{P}(Y_{n} =y_{n})
\end{align*}
Then, by the principle of induction it follows that, 
\begin{align*}
\mathbb{P}(\{Y_{j} =y_{j}\}^{n}_{j=1}) = \prod^{n}_{j=1}\Big(\mathbf{1}_{\{y_{j}=1\}}p + \mathbf{1}_{\{y_{j}=0\}}(1-p)\Big)
\end{align*}
Since $\{Y_{j}\}^{n}_{j=1}$ is a sequence of i.i.d Bernoulli random variables, it follows that
\begin{align*}
\underset{n \to \infty}\lim \dfrac{1}{n} \ln \mathbb{P}_n\Big(\Big| \dfrac{\sum^{n}_{j=1}Y_j}{n} - p \Big| \geq \epsilon\Big) = \epsilon \ln \dfrac{p}{\epsilon} +   (1-\epsilon) \ln \dfrac{1-p}{1-\epsilon} =  -I(\epsilon) 
\end{align*}
\subsection{Modified Observation Model}
In observation model described in the previous section, the conditional distribution of $Y_{n}$ is independent of the current and past states of the agent, and, the previous observations collected by the agent. We define an alternative model where the observation at a given time instant depends on the current state of the system. The conditional distribution of $Y_{n}$ given $\{X_{j} = x_j\}^{n}_{j=1}, \{Y_{j} = y_j\}^{n}_{j=1} $ is modified to, 
\begin{align*}
&\bar{\pi}(Y_{n} = 1 | \{X_{j} = x_j\}^{n}_{j=1}, \{Y_{j} = y_j\}^{n}_{j=1}  ) = \bar{\pi}(Y_{n} = 1 | X_{n} = \bar{x}_j)  = p_j , \\
&\bar{\pi}(Y_{n} = 0 | \{X_{j} = x_j\}^{n}_{j=1}, \{Y_{j} = y_j\}^{n}_{j=1}  ) = \bar{\pi}(Y_{n} = 0 | X_{n} = \bar{x}_j)  = 1-p_j , 
\end{align*}
where $x_n = \bar{x}_j$ for some $j \in {1,\ldots, m}$. $\{p_j\}^{m}_{j=1}$ are unknown and are to estimated. Thus, the probability distribution at stage $n$, $\bar{\mathbb{P}}_{n}$,  changes (from $\mathbb{P}_n$) to 
\begin{align*}
\bar{\mathbb{P}}_{n}(\{x_j, y_j\}^n_{j=1}) = \prod^n_{j=1}\hspace{-4pt}\Big(\hspace{-2pt}\sum^{m}_{k=1}\big(\mathbf{1}_{\{x_{j} =\bar{x}_k, y_{j}=1\}}p_{k} + \mathbf{1}_{\{x_{j} =\bar{x}_k, y_{j}=0\}}(1-p_k)\big)\Big)\Big(\pi_{j-1}(x_{j-1},y_{j-1})[x_{j}] \hspace{-2pt}\Big).
\end{align*}
Assumption: The policy of the agent is the same at every time step, i.e., $\pi_{j}(\cdot)[\cdot] = \pi(\cdot)[\cdot],\; \forall j$. The, The above expression for join probability can be re-enumerated as follows. Consider any arbitrary sequence of observation, 
\begin{align*}
(x_{1}, y_{1}), \; (x_{2}, y_{2}),\; (x_{3}, y_{3}), \ldots, (x_{n-1}, y_{n-1}), \; (x_{n}, y_{n}) 
\end{align*} 
At every instant, $(X_{n}, Y_{n})$,  belongs to one of the $2m$ states, $(\bar{x}_{k}, y), k = 1, \ldots, m, y= 0,1$. From time instant $j$ to $j+1$, the state of the system can change from one of the $2m$ states to another of the $2m$ states. Thus, there are $2m \times 2m$ possible change of states. In an observation sequence of length $n$, there are $n-1$ change of states. Let $\eta(k,u,l,v)$ denote the number of transitions from the state $(\bar{x}_k,u)$ to the state  $(\bar{x}_l,v)$ where $k,l = 1, \ldots, m$ and $u,v=0,1$. Let $\pi^{l}_{k,u} :=  \pi(\bar{x}_k,u)[\bar{x}_l]$ and $p^{v}_{l} = \big(\mathbf{1}_{\{v=1\}}p_{l} + \mathbf{1}_{\{v=0\}}(1-p_l)\big) $. For $n \geq 2$, the joint probability $\bar{\mathbb{P}}_{n}$ can be expressed as:
\begin{align*}
\bar{\mathbb{P}}_{n}(\{x_j, y_j\}^n_{j=1}) = \Bigg[\prod^m_{k,l=1}\prod^m_{u,v=0,1} \Big(p^{v}_{l}\pi^{l}_{k,u} \Big)^{\eta(k,u,l,v)} \Bigg]\bar{\mathbb{P}}_{1}(x_{1}, y_{1}),\; \sum^{m}_{k,l=1}\sum_{u,v=0,1}\eta(k,u,l,v) = n-1. 
\end{align*}
Thus, the joint distribution of $\{Y_{j}\}^{n}_{j=1}$ is 
\begin{align*}
\mathbb{P}(\{Y_{j} =y_{j}\}^{n}_{j=1}) = \sum_{\{x_{j}\}^{n}_{j=1} \in \mathcal{X}^n} \Bigg[\prod^m_{k,l=1}\prod^m_{u,v=0,1} \Big(p^{v}_{l}\pi^{l}_{k,u} \Big)^{\eta(k,u,l,v)} \Bigg]\bar{\mathbb{P}}_{1}(x_{1}, y_{1}).
\end{align*}
The above expression does not yield itself immediately to the LDP. We introduce the following definition. 
\begin{definition}
(Asymptotic Independence)
The joint distribution of $\{Y_{j}\}^{n}_{j=1}$ is said to satisfy the asymptotic independence property, if there exists $\{\pi^{*}_{l}\}^{m}_{l=1} \in \{ \pi \in \mathbb{R}^{m}: \pi_{j}\geq 0, \sum_{j}\pi_{j}=1\}$ such that, 
\begin{align*}
 \sum_{\{x_{j}\}^{n}_{j=1} \in \mathcal{X}^n} \Bigg[\prod^m_{k,l=1}\prod_{u,v=0,1} \Big(p^{v}_{l}\pi^{l}_{k,u} \Big)^{\eta(k,u,l,v)} \Bigg]\bar{\mathbb{P}}_{1}(x_{1}, y_{1}) = \Big(\sum^{m}_{l=1}p_{l}\pi^{*}_{l}\Big)^{s}  \Big(\sum^{m}_{l=1}(1-p_{l})\pi^{*}_{l}\Big)^{n-s},
\end{align*}
for $n$ arbitrarily large, $s= \sum^{n}_{j=1}\mathbf{1}_{y_{j}=1}$.
\end{definition}
We note that $\pi^*$ corresponds to stationary distribution of the Markov chain formed by the states of the system . For the proof, we refer to subsection \ref{Discussion on Definition 1}.
\begin{proposition}
If $\{Y_{j}\}^{n}_{j=1}$ satisfies the asymptotic independence property, then 
\begin{align}
\underset{n \to \infty}\lim \dfrac{1}{n} \ln \bar{\mathbb{P}}_n\Big(\Big| \dfrac{\sum^{n}_{j=1}Y_j}{n} - \sum^{m}_{l=1}p_{l}\pi^{*}_{l} \Big| \geq \epsilon\Big) = \epsilon \ln \dfrac{\sum^{m}_{l=1}p_{l}\pi^{*}_{l}}{\epsilon} +   (1-\epsilon) \ln \dfrac{1-\sum^{m}_{l=1}p_{l}\pi^{*}_{l}}{1-\epsilon} \label{Equation 1}
\end{align}
\end{proposition}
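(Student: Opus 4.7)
The plan is to reduce the proposition to the i.i.d.\ Bernoulli LDP already derived earlier in the paper. By the asymptotic independence property, for $n$ arbitrarily large the marginal distribution $\bar{\mathbb{P}}_n(\{Y_j=y_j\}_{j=1}^n)$ equals $\bar{p}^{s}(1-\bar{p})^{n-s}$, where $s=\sum_{j=1}^n\mathbf{1}_{\{y_j=1\}}$ and $\bar{p}:=\sum_{l=1}^m p_l\pi^{*}_l$. This coincides exactly with the joint law of $n$ i.i.d.\ Bernoulli variables with parameter $\bar{p}$. Hence the empirical mean $\hat{p}_n = n^{-1}\sum_{j=1}^n Y_j$ has, asymptotically, the same distribution it would have under an i.i.d.\ Bernoulli$(\bar{p})$ model.

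Next I would invoke the identical computation used in the previous subsection, with $p$ replaced by $\bar{p}$. Explicitly, one writes
\begin{align*}
\bar{\mathbb{P}}_n\Bigl(\bigl|\hat{p}_n - \bar{p}\bigr| \geq \epsilon\Bigr) = \sum_{s:\,|s/n-\bar{p}|\geq\epsilon} \binom{n}{s}\bar{p}^{\,s}(1-\bar{p})^{n-s},
\end{align*}
applies Stirling's approximation to the binomial (or equivalently, the standard Cram\'er upper/lower bounds for the Bernoulli family), and passes to $\tfrac{1}{n}\ln(\cdot)$ to recover the rate function in equation~(\ref{Equation 1}). Because the paper has already carried out this calculation in closed form and stated the resulting limit $-I(\epsilon)$ for the parameter $p$, the argument here is just the substitution $p\mapsto \bar{p}$, combined with the fact that the dominant contribution comes from the extremal value $s/n$ closest to $\bar{p}$ within the deviation set.

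The main obstacle, and the step that requires a careful argument rather than a routine substitution, is the \emph{asymptotic} nature of the independence property: the equality in the definition of asymptotic independence is stated only for $n$ arbitrarily large, not pointwise in $n$. To justify exchanging this with the $n^{-1}\ln$ limit, I would show that the multiplicative discrepancy between the true marginal $\bar{\mathbb{P}}_n(\{Y_j=y_j\}_{j=1}^n)$ and the effective Bernoulli product $\bar{p}^s(1-\bar{p})^{n-s}$ is subexponential in $n$, so that it contributes zero to $\lim_{n\to\infty} n^{-1}\ln\bar{\mathbb{P}}_n(\cdot)$. This is the natural place where the proof in subsection~\ref{Discussion on Definition 1} (where $\pi^*$ is identified with the stationary distribution of the induced Markov chain) is used: standard Perron--Frobenius / geometric ergodicity bounds for irreducible finite Markov chains give exactly this kind of subexponential control on the transient.

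Once this subexponential control is in place, the $\limsup$ and $\liminf$ of $n^{-1}\ln\bar{\mathbb{P}}_n(|\hat{p}_n-\bar{p}|\geq \epsilon)$ both agree with the corresponding quantities for the i.i.d.\ Bernoulli$(\bar{p})$ model, and the right-hand side of (\ref{Equation 1}) follows from the already-established i.i.d.\ rate formula. Thus, modulo the subexponential remainder estimate, the proof is a direct reduction to the result derived in subsection~\ref{Problem Setup}.
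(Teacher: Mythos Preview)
Your proposal is correct and follows the same underlying reduction as the paper---use the asymptotic independence property to replace the marginal law of $\{Y_j\}$ by an i.i.d.\ Bernoulli$(\bar p)$ law with $\bar p=\sum_l p_l\pi^*_l$, and then read off the Bernoulli rate function. The vehicle is slightly different: the paper routes the reduction through the Chernoff bound, computing $\lim_n n^{-1}\ln\mathbb{E}[e^{\theta\sum_j Y_j}]$, applying asymptotic independence inside that sum to collapse it to $\ln(e^{\theta}\bar p + (1-\bar p))$, and then performing the Legendre optimization in $\theta$ (referring to \cite{varadhan2016large} for the matching lower bound). You instead apply asymptotic independence directly to $\bar{\mathbb{P}}_n(\{Y_j=y_j\})$, write the tail as a binomial sum, and invoke the already-stated i.i.d.\ result with $p\mapsto\bar p$; this gives both bounds in one stroke and is marginally cleaner. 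Your explicit acknowledgment of the ``asymptotic'' caveat and the need for a subexponential correction via Perron--Frobenius ergodicity is a point the paper simply glosses over, treating the equality in Definition~1 as if it held exactly; so your version is, if anything, more careful on that step.
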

For proof,  we refer to subsection \ref{Proof of Proposition 1}.
\subsection{Problem Definition}\label{Problem Definition}
We define the error in estimation as $e_{n} = \dfrac{\sum^{n}_{j=1}Y_j}{n} - p$. From equation \ref{Equation 1}, the probability of estimation error being greater than $\epsilon \in (0,1)$,  can be approximated as $\mathbb{P}_{n}(\epsilon > \epsilon) \approx e^{-nI(\epsilon)}$. Since $p$ is unknown, the estimated probability of the estimation error  being greater than $\epsilon$ is defined as, $\hat{\mathbb{P}}_{n}(\epsilon > \epsilon) \approx e^{-n \mathbb{E}_{\mathbb{P}_n}\big[I(\epsilon, \hat{p}_n)\big]}$, where $I(\epsilon, \hat{p}_n)$ is 
\begin{align*}
I(\epsilon, \hat{p}_{n}) = \epsilon \ln \dfrac{\epsilon}{\hat{p}_{n}} +   (1-\epsilon) \ln \dfrac{1-\epsilon}{1-\hat{p}_{n}}. 
\end{align*}
Due to the independence of the observations and states of the agent, resulting in i.i.d observations in the first observation model, the policy of the agent does not impact the observation sequence, and hence the estimate or the estimated probability of error. However in the second observation model, if we define $\bar{I}(\epsilon, \{p_{l,n}\}^{m}_{l=1})$ as, 
\begin{align*}
\bar{I}(\epsilon, \{p_{l,n}\}^{m}_{l=1}) = \epsilon \ln \dfrac{\epsilon}{\sum^{m}_{l=1}p_{l,n}\pi^{*}_{l}} +   (1-\epsilon) \ln \dfrac{1-\epsilon}{1-\sum^{m}_{l=1}p_{l,n}\pi^{*}_{l}},
\end{align*}
the estimated probability of error depends on the stationary distribution of the Markov chain formed by the states of the agent. Since the stationary distribution is dependent on the policy of the agent, the probability of error can be manipulated through policy of the agent. For a given decision making horizon, $N \in \mathbb{N}$, the objective is to find the stationary distribution, $\{\pi^*_{l}\}^{m}_{l=1}$, such that $\tilde{\mathbb{P}}_{N}(\epsilon > \epsilon)  \approx e^{-n\bar{I}(\epsilon, \{p_{l,N}\}^{m}_{l=})}$ is minimized for a fixed $\epsilon \in (0,1)$. Thus, the  decision making problem formulated as a optimization problem is, 
\begin{align*}
\underset{\{\pi^*_{l}\}^{m}_{l=1}}{\min} \; \mathbb{E}_{\bar{\mathbb{P}}_N} \Big[\epsilon \ln \dfrac{\sum^{m}_{l=1}p_{l,N}\pi^{*}_{l}}{\epsilon} +   (1-\epsilon) \ln \dfrac{1-\sum^{m}_{l=1}p_{l,N}\pi^{*}_{l}}{1-\epsilon}\Big].
\end{align*}
\section{Analysis of the Problem}\label{Analysis of the Problem}
In this section, we analyze the problem formulated above in a structured approach. We begin by studying the dependence  of the cost function  on the error bound $\epsilon$. We obtain an equivalent formulation of the problem and study it through a dynamic programming approach. 
\vspace{-0.5cm}
\subsection{Dependence on error bound}
\begin{wrapfigure}{R}{0.5\textwidth}
\vspace{-0.8cm}
  \begin{center}
    \includegraphics[width=0.48\textwidth]{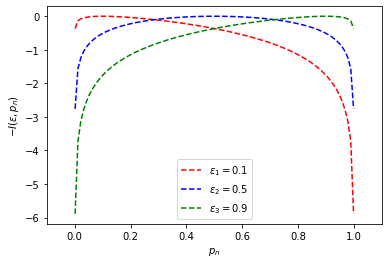}
  \end{center}
  \vspace{-1cm}
  \caption{Rate function vs estimated probability}
  \label{Figure 1}
\vspace{-0.5cm}
\end{wrapfigure}

In Figure \ref{Figure 1}, we have plotted $-I(\epsilon,\cdot)$ for three different values of $\epsilon$. For lower values $\epsilon$, higher rate (i.e., lower estimated probability of error) is achieved at higher values of estimated probability $p_n$, i.e., when $p_{n}$ is closer to $1$. Similarly, for higher values of $\epsilon$, higher rate is achieved at lower values of estimated probability $p_n$, i.e., when $p_n$ is loser to $0$. Hence, these choices of $\epsilon$ induce bias in the estimation problem, that is, the estimates get biased to $1$ or $0$, depending the choice of $\epsilon$. However, for $\epsilon=0.5$, we note that $-I(\epsilon,\cdot)$ is symmetric about $p_{n}=0.5$, i.e, the rate is the same at $p$ and $1-p$. Hence, for $\epsilon=0.5$, there is no bias. For $\epsilon=0.5$, the cost function gets modified to, 
\begin{align*}
\mathbb{E}_{\bar{\mathbb{P}}_N} [-\bar{I}(\frac{1}{2}, \{p_{l,N}\}^{m}_{l=1})] = \mathbb{E}_{\bar{\mathbb{P}}_N}\Bigg[\ln \sqrt{2\Big(\sum^{m}_{l=1}p_{l,N}\pi^{*}_{l}\Big)}  + \ln \sqrt{2\Big(1-\sum^{m}_{l=1}p_{l,N}\pi^{*}_{l}\Big)}\Bigg].
\end{align*}
By non-decreasing property of the $\ln(\cdot)$ and $\sqrt{(\cdot)}$ functions, and, the monotonicity of expectation, it suffices to minimize,
\begin{align*}
\mathbb{E}_{\bar{\mathbb{P}}_N} \Big[\Big(\sum^{m}_{l=1}p_{l,N}\pi^{*}_{l}\Big)\Big(1-\sum^{m}_{l=1}p_{l,N}\pi^{*}_{l}\Big)\Big].
\end{align*}
Thus, the negative rate depends on the product of the estimated probability that  $Y_n=1$ and the estimated probability that $Y_n=0$. 
\vspace{-0.5cm}
\subsection{Dynamic Programming Approach}
Suppose the above estimation cost and problem can be expressed of the form, 
\begin{align*}
\underset{\{\pi_{j}(X_{j}, Y{j})[\cdot]\}^{n-1}_{j=1}} {\min} \mathbb{E}_{\bar{\mathbb{P}}_N} \Big[\Big(\dfrac{\sum^{N}_{j=1}Y_{j}(\pi_{j-1}(X_{j-1}, Y_{j-1})[\cdot])}{N}\Big)\Big(1- \dfrac{\sum^{N}_{j=1}Y_{j}(\pi_{j-1}(X_{j-1}, Y_{j-1})[\cdot]))}{N}\Big)\Big],
\end{align*}
where $Y_{j}(\pi_{j-1}(X_{j-1}, Y_{j-1})[\cdot])$ denotes the observation at stage $j$ due to the policy \\ $\pi_{j-1}((X_{j-1}, Y_{j-1})  [\cdot])$ at stage $j-1$. The cost function can be  split in to stage costs, 
\begin{align*}
&S_{N+1}\Big(\{Y_j\}^{N}_{j=1}\Big) = 0, \;S_{N}\Big(\{Y_j\}^{N}_{j=1}\Big) =   \Big[N Y_{N} - Y^{2}_{N}-2\sum^{N-1}_{j=1}Y_{N}Y_{j}\Big],\\
&S_{n}\Big(\{Y_j\}^{i}_{j=1}\Big) = \Big[N Y_{i} - Y^{2}_{i}-2\sum^{i-1}_{j=1}Y_{i}Y_{j}\Big], 2 \leq n \leq N-2, \; S_{1}(Y_1)  =  [ N Y_{1}- Y^2_{1}].
\end{align*}
The Bellman's Equation for the above problem is:
\begin{align*}
&V_{n}(X_{n},\{Y_j\}^{n}_{j=1}) = \underset{\pi_{n}(\cdot)[\cdot]}{\min} \mathbb{E}_{\bar{\mathbb{P}}_N}\Bigg[ S_{n}\Big(\{Y_j\}^{n}_{j=1}\Big)  + \\
&\hspace{3.85cm}\sum_{y=0,1}\Big[V_{n+1}\Big(\{Y_j\}^{n}_{j=1};y)\Big)\Big(\sum^{m}_{l=1}p^{y}_{l}\pi_{n}(X_{n}, Y_{n})[\bar{x}_{l}]\Big) \Big]\Bigg| \sigma \Big( X_{n}, \{Y_{j}\}^{n}_{j=1}\Big)\Bigg], \\
&V_{n}(X_{n},\{Y_j\}^{n}_{j=1}) =S_{n}\Big(\{Y_j\}^{i}_{j=1}\Big)  +  \underset{\pi_{n}(\cdot)[\cdot]}{\min}   \sum_{y=0,1}\Big[V_{n+1}\Big(\{Y_j\}^{n}_{j=1};y)\Big)\Big(\sum^{m}_{l=1}p^{y}_{l,n}\pi_{i}(X_{n}, Y_{n})[\bar{x}_{l}]\Big)\Big]
\end{align*}
where $1\leq i \leq N-1$ and $V_{N+1}(\{Y_j\}^{N}_{j=1}) =0$. We make three observations about the above equation. (1) At a given stage, $n$, the stage cost is not impacted by the policy at the stage. Only the cost to go of the the next stage is influenced by the current policy. This is to be expected as there are costs involving the ``norm" of the policy. (2) In the first equation the $\{p^{y}_{l}\}^{m}_{l=1}$ is unknown. In the following equation, we invoke the separation principle \citep{kumar2015stochastic} allowing us to use an estimate of these probabilities. (3) The curse of dimensionality. The state space of the value function changes with every iteration. For large $N$, we could have to compute the value function for $\mathcal{X} \times \{0,1\}^{N}$ values, i.e., the domain of the state space growing exponentially. Though the value functions can be computed offline, this approach is not a computationally effective to  active learning 
\vspace{-0.7cm}
\subsection{Relaxation of The Problem}
Given the challenges posed by the dynamic programming approach, we consider a relaxation of the problem formulated in subsection \ref{Problem Definition}. Consider the optimization problem, 
\begin{align*}
\underset{\{\pi^{*}_l\}^{m}_{j=1}}{\min}  \mathbb{E}_{\bar{\mathbb{P}}_N}\Big[ \Big(\sum^{m}_{l=1}p_{l}\pi^{*}_l\Big) \Big(1- \sum^{m}_{l=1}p_{l}\pi^{*}_l\Big) \Big| \sigma \Big( X_{i}, \{Y_{j}\}^{i}_{j=1}\Big)\Big] 
\end{align*} 
Invoking the separation principle, at iteration $N$, the following optimization problem is solved. 
\begin{align}
\underset{\{\pi^{*}_l\}^{m}_{j=1}}{\min} \Big(\sum^{m}_{l=1}p_{l,N}\pi^{*}_l\Big) \Big(1- \sum^{m}_{l=1}p_{l,N}\pi^{*}_l\Big),\; \text{s.t}\; \sum_{l}\pi^{*}_l= 1. \label{Equation 2}
\end{align} 
Since the cost function is concave in the optimization variables, the optimizer is one of the vertices of the probability simplex \citep{rockafellar1970convex}, i.e., $\pi^{*}= [0,\ldots, 1, \ldots, 0]$, where $\pi^*_{l^*} =1$ if $l^*$ is such that 
\begin{align*}
\Big(p_{l^*,N}\Big) \Big(1- p_{l^*,N}\Big) \leq \Big(p_{l,N}\Big) \Big(1- p_{l,N}\Big), l \in \{1, \ldots, m\}.
\end{align*}
Using the stationary distribution of the $\{X_n\}$  Markov chain, we can retrieve the stationary distribution of $\{X_n,Y_n\}$ Markov chain as, $\bar{\pi}^{*} = [0,\ldots, p_{l^*,N}, \ldots, 0, 0, \ldots, 1-p_{l^*,N}, \ldots,0]$. Given the stationary distribution, retrieving the policy $\pi(\cdot)[\cdot]$ is an ill-posed problem as there are $2m \times m$ variables and only $2m$ equations. Solving the equations lead to the following solution,
\begin{align*}
\pi^{*}(\bar{x}_{l^*}, 1)[\bar{x}_{l^*}] =\dfrac{p_{l^*,N}}{p^2_{l^*,N} + (1-p_{l^*,N})^2}, \pi^{*}(\bar{x}_{l^*}, 0)[\bar{x}_{l^*}] =\dfrac{1-p_{l^*,N}}{p^2_{l^*,N} + (1-p_{l^*,N})^2},  
\end{align*}
and, $\pi^{*}(\bar{x}_{k}, y)[\bar{x}_{l^*}] = 0, \;k \in \{1, \ldots, m\}, \; k\neq l^*, y=0,1$. Since there are no conditions for the remaining $2m \times (m-1)$ variables, the policy $\pi^{*}(\bar{x}_{k}, y)[\bar{x}_{l}], \;l,k \in \{1, \ldots, m\}, \; l, k \neq l^*, y=0,1$ can be chosen at random from the interval $[0,1]$. Though this policy can be executed in practice, we are unable to interpret the policy. Retaining the essence of the optimization problem in equation \ref{Equation 2}, given the current state ($X_{n}=\bar{x}_{k}, Y_{N}=y$), we define a sequence of optimization problems, 
\begin{align}
(P_n): \underset{\{\pi_{n}(\bar{x}_k,y)[\bar{x}_l]\}^{m}_{l=1}}{\min} &\Big(\sum^{m}_{l=1}p_{l,n}\pi_{,}(\bar{x}_k,y)[\bar{x}_l]\Big) \Big(1- \sum^{m}_{l=1}p_{l,n}\pi_{n}(\bar{x}_k,y)[\bar{x}_l]\Big), \nonumber \\ 
 \text{s.t}\; &\sum_{l}\pi_{n}(\bar{x}_k,y)[\bar{x}_l] = 1, n \in \mathbb{N} \label{Equation 3}
\end{align} 
where $\pi_{n}(\bar{x}_k,u)[\bar{x}_l]$ denotes the probability of transition from state $X_{n}=\bar{x}_{k}, Y_{n}=u$ to state $X_{n+1}=\bar{x}_{l}$ as defined in subsection \ref{Problem Setup}.
\vspace{-0.5cm}
\section{Algorithm}\label{Algorithm}
In this section, we present the active parameter learning algorithm and the classification algorithm to identify the safe regions
\vspace{-0.5cm}
\subsection{Parameter Learning Algorithm}
Similar to the optimization problem in \ref{Equation 2}, the cost function is concave in the optimization variables in  the problem in \ref{Equation 3}. Thus, pure strategies are optimal if the minimum cost is achieved at a unique vertex of the probability simplex. The Markov chain transitions from the state ($X_{n}=\bar{x}_{k}, Y_{n}=u$) to the state $X_{n+1}=\bar{x}_{l^*}$ with probability $1$, i.e., $\pi^{*}_{n}(\bar{x}_k,y)[\bar{x}_{l^*}] =1$  if there exits unique ${l^*}$ is such that, 
\begin{align}
\Big(p_{l^*,n}\Big) \Big(1- p_{l^*,n}\Big) \leq \Big(p_{l,n}\Big) \Big(1- p_{l,n}\Big), l \in \{1, \ldots, m\}. \label{Equation 4}
\end{align}
If there are multiple states with the same minimum cost, then one of them is chosen at random, with a uniform distribution. That is, if the states $\{\bar{x}_{l_{1},n},\ldots, \bar{x}_{l_{r}, n} \}$ have the same minimum cost, then $\pi^{*}_{n}(\bar{x}_k,y)[\bar{x}_{l_{u}, n}] = \frac{1}{r}, u = 1, \ldots, r$. 

\noindent The above policy introduces bias in the following sense. After $N$ iterations, suppose the state $\bar{x}_{k}$ is the unique state satisfying the  inequality  in \ref{Equation 4} and $N$ is such that
\begin{align*}
|p_{k, n} - p_{k, n+1}| < \underset{l \in \{1, \ldots, m\}, l \neq k} {\inf} \Big(p_{l,N}\Big) \Big(1- p_{l,N}\Big) - \Big(p_{k,N}\Big) \Big(1- p_{k,N}\Big), \forall n \geq N.
\end{align*} 
Then, the optimal policy at every iteration is to stay at stage $k$, for all $n \geq N$. In order to overcome this bias, we introduce parameters $N_{\max}$, $N_{\delta}$ and $\delta$. If the optimal policy is to stay at state $k$ for more than $N$ iterations, i.e., $\pi_{j}(\bar{x}_k,u)(\bar{x}_k) = 1$ for $j \in [n- N_{\max}, n]$ or $|p_{k, j} - p_{k, n} | < \delta, \forall j \in [n- N_{\delta}, n]$, then $p_{k,n}$ is the estimate at state $k$ and the state is removed from further computations. The optimization problem is now executed with the states $\{\bar{x}_1, \ldots,\bar{x}_m\}\sim \{\bar{x}_k\}$. The pseudo code corresponding to the algorithm is stated in Algorithm \ref{Algorithm 1}. 
\begin{algorithm}
\caption{Active Parameter Learning Algorithm}
\begin{algorithmic}[1]\label{Algorithm 1}
\STATE Initialize $\delta, N_{\delta},N_{\max}$, $k_{1} \sim \text{unif}\{1,m\}$, $n \gets 1$, $X_{1} \gets \bar{x}_{k_1}$, $current-state \gets \bar{x}_{k_1}$
\STATE $past-state \gets \bar{x}_{k_1} $, $active-states\gets \{\bar{x}_{1}, \ldots, \bar{x}_{m}\}$, $counter \gets zero$, $max \gets 0$
\WHILE {$active-states \neq \varnothing$}
\STATE Collect observation $Y_{n}$ for current iteration.
\STATE Update $p_{n,k}$ where $k$ is index corresponding to the $current-state$
\STATE If $\pi_{j}(\bar{x}_k,u)(\bar{x}_k) = 1$ for $j \in [n- N_{\max}, n]$, $Condition_1 = TRUE$
\STATE If $|p_{k, j} - p_{k, n} | < \delta, \forall j \in [n- N_{\delta}, n]$, $Condition_2 = TRUE$
\IF    {$Condition_1 = TRUE$ or $Condition_2 = TRUE$}
\STATE $active-states \gets  active-states \sim \{\bar{x}_k\} $, 
\STATE $c_{k} \gets (p_{k,n})(1-p_{k,n})$
\ENDIF
\IF    {$active-states \neq \varnothing$}
\STATE Find policy at current iteration, $\pi^{*}_{n}(\bar{x}_k,y)[\cdot]$
\STATE $past-state  = current-state$, $n\gets n+1$
\STATE Execute policy and move to next state
\ENDIF
\ENDWHILE
\end{algorithmic}
\end{algorithm}
\vspace{-0.5cm}
\subsection{Robust Classification Algorithm}
The objective is to define a classification rule to label the cells safe or unsafe. The classification rules are typically threshold based. First, we note that defining safety as a function of $p_{k,n}$ is not suitable. This is because,  when $p_{k,n}$ is greater than $\frac{1}{2}$ but close to $\frac{1}{2}$, after few iterations it might be less than $\frac{1}{2}$, i.e., it is not a robust measure. However, $(p_{k,n})(1-p_{k,n})$ is a robust measure. When $p_{k,n}$ is greater than $\frac{1}{2}$  and the measure is low, it implies $p_{k,n}$ is closer $1$ which implies that the region is safe. When the measure is high, it means that the agent is not able to distinguish between $p_{k,n}$ and $1-p_{k,n}$ and such a region could be marked unsafe. 

Formalizing the same, we define $c_{k}  = (p_{k,n_k})(1-p_{k,n_k})$, where $n_{k}$ is the last iteration after which the state $\bar{x}_{k}$ is eliminated from the optimization. We could consider absolute classification, where we define a threshold $c_{T}$ independent of $\{c_{k}\}^{m}_{k=1}$. If $c_{k} \leq c_{T}$ and $p_{k,n_k} \geq \frac{1}{2}$, then the region is said to be safe and unsafe otherwise. We consider relative classification, i.e, $c_{T}$  is a function of  $\{c_{k}\}^{m}_{k=1}$.   Let $c_{mean}$ and $c_{median}$ be the mean and median of $\{c_{k}\}^{m}_{k=1}$. W.L.O.G, let $c_{1} \leq c_{2} \leq \ldots c_{m-1} \leq c_{m}$. We define the threshold as follows. 
\begin{definition}
Suppose $k^*$ is such that $c_{k^*+1} -c_{k^*}$ is greater than or equal to  $c_{k+1} -c_{k}$ for all other $k$. If $k^* \in [\lceil \frac{m}{4}\rceil, \lceil \frac{3m}{4}\rceil] $, then $c_{T} = c_{k^*}$. Otherwise, $c_{T} = c_{median}$ 
\end{definition}
The above definition of the threshold ensures that on an average, appropriately half the regions are marked safe. The above definition can be explained as follows. If there is skewness in the $\{c_{k}\}^m_{k=1}$ values, that is, there is a partition of the collection into two subsets where one subset can be differentiated from the other through a``large" difference in $c_{k}$ values, then the threshold corresponds to the lower value in the differentiation. Otherwise, if there is no skewness the threshold corresponds to $c_{median}$.  
\begin{definition}
The Voronoi Cell with center $\bar{x}_{k}$ is said to be safe if $c_{k} \leq c_{T}$ and $p_{k,n_k} \geq \frac{1}{2}$. Let $ \mathcal{X}_s$ denote the set of Voronoi centers corresponding to the safe sets. 
\end{definition}
\begin{proposition}\label{Proposition 5}
$\exists N_{\omega}$ such that $X_{n}(\omega) \in \mathcal{X}_s, \; \forall n \geq N_{\omega}, \bar{\mathbb{P}} \; a.s$.
\end{proposition}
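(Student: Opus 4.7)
The plan is to establish Proposition~\ref{Proposition 5} by combining a finite-termination argument for Algorithm~\ref{Algorithm 1} with a consistency result for the classifier, and finally arguing that after termination the dynamics stay inside $\mathcal{X}_s$. Concretely, I would split the proof into (a)~showing that the set of active states becomes empty in an almost surely finite number of steps $N_\omega$; (b)~showing that at time $N_\omega$ the labels produced by the classification rule match those induced by the true parameters $\{p_l\}$; and (c)~showing that from $N_\omega$ onward the agent is confined to $\mathcal{X}_s$.

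For (a), I would track the visit counts $N_k(n)=\sum_{j=1}^n \mathbf{1}_{\{X_j=\bar{x}_k\}}$ for each cell. Either $N_k(n)\to\infty$, in which case the conditional i.i.d.\ Bernoulli$(p_k)$ structure of $Y_j$ given $X_j=\bar{x}_k$ together with the strong law gives $p_{k,n}\to p_k$ almost surely, so the sequence is eventually Cauchy within tolerance $\delta$ and Condition~2 triggers in finite time; or $N_k(n)$ stays bounded, in which case $\bar{x}_k$ is eliminated by Condition~1 after at most $N_{\max}$ consecutive visits. An induction on the number of remaining active cells then yields almost sure finite termination, and hence defines $N_\omega$. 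Part (b) follows by noting that once Condition~2 triggers, $|p_{k,n_k}-p_k|$ is controlled by $\delta$, so $c_k = p_{k,n_k}(1-p_{k,n_k})$ approximates $p_k(1-p_k)$; the gap/median-based $c_T$ then separates the ``extremal'' cells (small $c_k$) from the ``ambiguous'' ones (large $c_k$), while the additional constraint $p_{k,n_k}\ge \frac{1}{2}$ restricts the safe set to cells with $p_k$ close to $1$, matching the definition of $\mathcal{X}_s$.

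For (c), once the loop in Algorithm~\ref{Algorithm 1} exits at $N_\omega$ the classification $\mathcal{X}_s$ is fixed, and the residual dynamics are governed by the restriction of the one-step problem $(P_n)$ to $\mathcal{X}_s$. By the same concavity argument used in Section~\ref{Analysis of the Problem}, the minimiser of the restricted $(P_n)$ lies at a vertex of the simplex over $\mathcal{X}_s$, so the induced pure strategy forces $X_{n+1}\in\mathcal{X}_s$ whenever $X_n\in\mathcal{X}_s$. Starting from a first post-termination state chosen by one such restricted solve, induction on $n$ delivers $X_n(\omega)\in\mathcal{X}_s$ for all $n\ge N_\omega$, $\bar{\mathbb{P}}$-a.s. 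The main obstacle I anticipate is exactly this last step: the algorithm as stated halts when the set of active states is empty and does not explicitly prescribe the agent's subsequent behaviour, so one must formalise a continuation rule on $\mathcal{X}_s$ and verify that this rule preserves membership in $\mathcal{X}_s$; by contrast (a) and (b) amount to standard strong-law bookkeeping once the asymptotic-independence framework of the earlier sections has been invoked.
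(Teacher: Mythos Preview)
Your approach follows the same core line as the paper's: invoke the strong law of large numbers so that $p_{k,n}\to p_k$ almost surely, argue that once the error is smaller than the minimal gap $|p_k(1-p_k)-p_l(1-p_l)|$ the ordering of the $c_k$ and hence the threshold index stabilise, and conclude that the resulting $\mathcal{X}_s$ is invariant and traps the trajectory. The paper's proof is a single terse paragraph that asserts each of these steps without your (a)--(c) decomposition; in particular it does not separately argue finite termination of Algorithm~\ref{Algorithm 1} (it simply says ``after a finite number of visits to all the cells'') and it closes by asserting that ``states of the MDP remain in this set'' without formalising any continuation rule. So the obstacle you flag in~(c) is real and is glossed over in the paper as well; your proposal to restrict $(P_n)$ to $\mathcal{X}_s$ and reuse the concavity/vertex argument is a natural way to fill that gap.

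One small imprecision in your step~(a): the clause ``or $N_k(n)$ stays bounded, in which case $\bar{x}_k$ is eliminated by Condition~1 after at most $N_{\max}$ consecutive visits'' does not follow as written, since Condition~1 requires $N_{\max}$ \emph{consecutive} stays at $\bar{x}_k$, and a cell visited only finitely many non-consecutive times would trigger neither condition. The induction you actually want is: among the current active states at least one is visited infinitely often (because at every step the agent occupies some active state), for that state the SLLN forces Condition~2 to fire in finite time, and the active set strictly shrinks. This repairs the dichotomy without altering the structure of your argument.
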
  
For the pseudo code of the classification algorithm and proof of the above proposition, we refer to subsection \ref{Safety Algorithm}. 
\section{Numerical Example}\label{Example}
We consider the example described in Figure \ref{Figure 2}. The simulation set up is described as follows. We let $\delta=0.02$, $N_{\delta} = 250$, $N_{\max} = 50$. The true probability of observing $1$ given the center is listed in the first row of  Table \ref{Table 1}. With this setup, the Algorithms were executed. The estimated  probability of observing $1$ given the Voronoi center after 350 iterations is listed in the second row of  Table \ref{Table 1}. At this juncture it becomes clear that regions corresponding to centers $\bar{x}_{1}, \bar{x}_{2}, \bar{x}_{4}, \bar{x}_{6}, \bar{x}_{8}$ are safe while regions corresponding to $\bar{x}_{3},\bar{x}_{5}, \bar{x}_{7},\bar{x}_{9}$ are unsafe. However, the parameter learning algorithm is executed further. The final estimates of the probabilities are listed in the third row of Table \ref{Table 1}. The order in which the safe regions were identified is depicted in figure \ref{Figure 3}. Safe regions are depicted in green while unsafe regions are depicted in red. It was observed that that the MDP  stayed in region $1$,followed by region $2$, region $4$, then region $6$ and finally stopped at region $8$. At each region the probability of observing $1$ was estimated to the accuracy of $0.02$. The parameter learning algorithm is stopped at this juncture. Using the estimated probabilities, $c_{3}-c_{8}$  was found to be maximum. Hence, $c_{T}=c_{8}=0.1936$ which also equals $c_{median}$. Using this threshold value, the regions are labeled safe or unsafe. The same has been depicted on the right side of Figure \ref{Figure 3}. 
\begin{table}[ht!]
\vspace{-0.2cm}
\begin{center}
\begin{tabular}{|| c | c | c | c | c | c | c | c | c | c ||} 
\hline
$k$  &   $\bar{x}_1$  &  $\bar{x}_2$  & $\bar{x}_3$   & $\bar{x}_4$ &$\bar{x}_5$  &  $\bar{x}_6$  & $\bar{x}_7$   & $\bar{x}_8$ &$\bar{x}_9$ \\
\hline
$p_{k}$ & $0.9$ & $0.84$ & $0.58$ & $0.79$ & $0.6$ & $0.75$ & $0.54$ & $0.72$ & $0.56$ \\
\hline
$p_{k,n}$ & $0.875$ & $0.80$ & $0.625$ & $0.775$ & $0.525$ & $0.725$ & $0.575$ & $0.675$ & $0.475$ \\
\hline
$p_{k, n_{k}}$ & $0.91$ & $0.85$ & $0.625$ & $0.80$ & $0.525$ & $0.7625$ & $0.575$ & $0.7375$ & $0.475$ \\
\hline
\end{tabular}\\
\vspace{-0.2cm}
\caption{Estimates of the parameters at different iterations}
\label{Table 1}
\end{center}
\vspace{-1cm}
\end{table} 
\begin{figure}
\begin{center}
\includegraphics[scale=0.38]{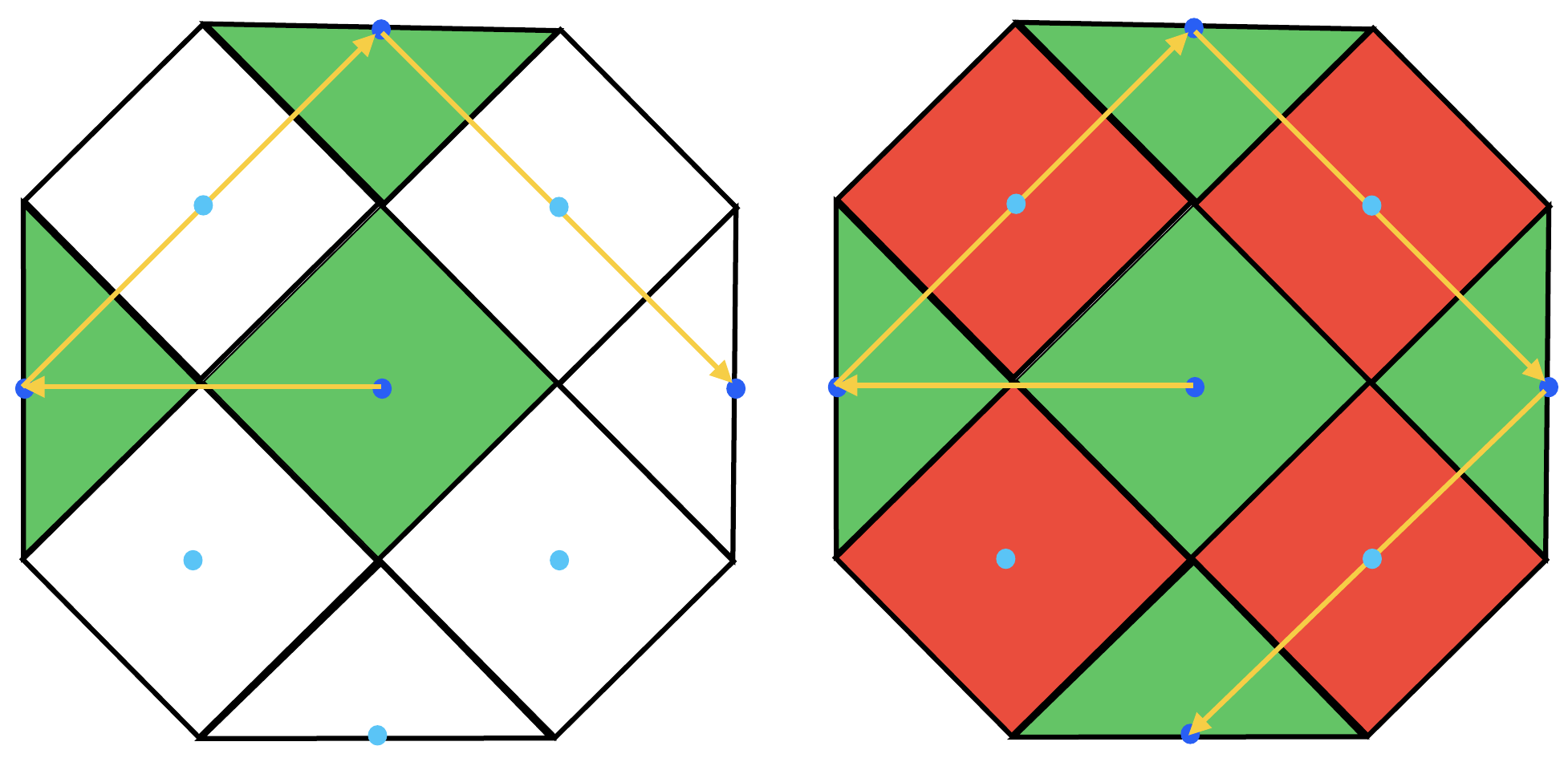}
\end{center}
\caption{(Left) 3 safe regions identified and transit to fourth region. (Right) All safe and unsafe regions identified by the agent}
\label{Figure 3}
\end{figure}

\section{Conclusions and Future Work}\label{Conclusion and Future Work}
Thus, in this paper we considered a control theoretic formulation of an active learning problem. Due to the computational challenges posed by the dynamic programming equations, we formulated a relaxed one step optimization problem. This optimization problem can be interpreted as model predictive control problem with prediction horizon of $1$ time step. Using the solution of the optimization problem, we proposed a active parameter learning algorithm and a robust classification algorithm. As future work, we are interested in characterization of the rate of convergence of the prescribed learning algorithm and compare it with the LDP. We are also interested in studying  the problem for more general observation models for e.g. the exponential family of distributions.  
\acks{This work was supported in part by Swedish Research Council Distinguished Professor Grant 2017-01078, Knut and Alice Wallenberg Foundation Wallenberg Scholar Grant, and the Swedish Strategic Research Foundation FUSS SUCCESS Grant.}
\bibliography{biblio}

\begin{thebibliography}{25}
\providecommand{\natexlab}[1]{#1}
\providecommand{\url}[1]{\texttt{#1}}
\expandafter\ifx\csname urlstyle\endcsname\relax
  \providecommand{\doi}[1]{doi: #1}\else
  \providecommand{\doi}{doi: \begingroup \urlstyle{rm}\Url}\fi

\bibitem[Bernstein et~al.(2013)Bernstein, Graham, Cline, Dolan, and
  Rajan]{bernstein2013learning}
Matthew Bernstein, Rishi Graham, Danelle Cline, John~M Dolan, and Kanna Rajan.
\newblock Learning-based event response for marine robotics.
\newblock In \emph{2013 IEEE/RSJ International Conference on Intelligent Robots
  and Systems}, pages 3362--3367. IEEE, 2013.

\bibitem[Dalal et~al.(2018)Dalal, Dvijotham, Vecerik, Hester, Paduraru, and
  Tassa]{dalal2018safe}
Gal Dalal, Krishnamurthy Dvijotham, Matej Vecerik, Todd Hester, Cosmin
  Paduraru, and Yuval Tassa.
\newblock Safe exploration in continuous action spaces.
\newblock \emph{arXiv preprint arXiv:1801.08757}, 2018.

\bibitem[Ding et~al.(2021)Ding, Feng, Chong, Pan, and Sun]{ding2021adaptive}
Yun Ding, Jinpeng Feng, Yanwen Chong, Shaoming Pan, and Xiaohui Sun.
\newblock Adaptive sampling toward a dynamic graph convolutional network for
  hyperspectral image classification.
\newblock \emph{IEEE Transactions on Geoscience and Remote Sensing},
  60:\penalty0 1--17, 2021.

\bibitem[Djouzi et~al.(2022)Djouzi, Beghdad-Bey, and Amamra]{djouzi2022new}
Kheyreddine Djouzi, Kadda Beghdad-Bey, and Abdenour Amamra.
\newblock A new adaptive sampling algorithm for big data classification.
\newblock \emph{Journal of Computational Science}, 61:\penalty0 101653, 2022.

\bibitem[Fan et~al.(2020)Fan, Nguyen, Thakker, Alatur, Agha-mohammadi, and
  Theodorou]{fan2020bayesian}
David~D Fan, Jennifer Nguyen, Rohan Thakker, Nikhilesh Alatur, Ali-akbar
  Agha-mohammadi, and Evangelos~A Theodorou.
\newblock Bayesian learning-based adaptive control for safety critical systems.
\newblock In \emph{2020 IEEE international conference on robotics and
  automation (ICRA)}, pages 4093--4099. IEEE, 2020.

\bibitem[Fossum et~al.(2020)Fossum, Ryan, Mukerji, Eidsvik, Maughan, Ludvigsen,
  and Rajan]{fossum2020compact}
Trygve~Olav Fossum, John Ryan, Tapan Mukerji, Jo~Eidsvik, Thom Maughan, Martin
  Ludvigsen, and Kanna Rajan.
\newblock Compact models for adaptive sampling in marine robotics.
\newblock \emph{The International Journal of Robotics Research}, 39\penalty0
  (1):\penalty0 127--142, 2020.

\bibitem[Goble(2010)]{goble2010control}
William~M Goble.
\newblock \emph{Control systems safety evaluation and reliability}.
\newblock ISA, 2010.

\bibitem[Knight(2002)]{knight2002safety}
John~C Knight.
\newblock Safety critical systems: challenges and directions.
\newblock In \emph{Proceedings of the 24th international conference on software
  engineering}, pages 547--550, 2002.

\bibitem[Koller et~al.(2018)Koller, Berkenkamp, Turchetta, and
  Krause]{koller2018learning}
Torsten Koller, Felix Berkenkamp, Matteo Turchetta, and Andreas Krause.
\newblock Learning-based model predictive control for safe exploration.
\newblock In \emph{2018 IEEE conference on decision and control (CDC)}, pages
  6059--6066. IEEE, 2018.

\bibitem[Kumar and Varaiya(2015)]{kumar2015stochastic}
Panqanamala~Ramana Kumar and Pravin Varaiya.
\newblock \emph{Stochastic systems: Estimation, identification, and adaptive
  control}.
\newblock SIAM, 2015.

\bibitem[Li et~al.(2024)Li, Tang, Chen, Zheng, Liu, and Ma]{li2024geometry}
Yulin Li, Xindong Tang, Kai Chen, Chunxin Zheng, Haichao Liu, and Jun Ma.
\newblock Geometry-aware safety-critical local reactive controller for robot
  navigation in unknown and cluttered environments.
\newblock \emph{IEEE Robotics and Automation Letters}, 2024.

\bibitem[Meckes and Szarek(2012)]{meckes2012concentration}
Mark Meckes and Stanis{\l}aw Szarek.
\newblock Concentration for noncommutative polynomials in random matrices.
\newblock \emph{Proceedings of the American Mathematical Society}, 140\penalty0
  (5):\penalty0 1803--1813, 2012.

\bibitem[Moldovan and Abbeel(2012)]{moldovan2012safe}
Teodor~Mihai Moldovan and Pieter Abbeel.
\newblock Safe exploration in markov decision processes.
\newblock \emph{arXiv preprint arXiv:1205.4810}, 2012.

\bibitem[Patterson et~al.(2014)Patterson, McClean, Morrow, Parr, and
  Luo]{patterson2014timely}
Timothy Patterson, Sally McClean, Philip Morrow, Gerard Parr, and Chunbo Luo.
\newblock Timely autonomous identification of uav safe landing zones.
\newblock \emph{Image and Vision Computing}, 32\penalty0 (9):\penalty0
  568--578, 2014.

\bibitem[Raghavan and Johansson(2023)]{raghavan2023distributed}
Aneesh Raghavan and Karl~Henrik Johansson.
\newblock Distributed regression by two agents from noisy data.
\newblock In \emph{2023 European Control Conference (ECC)}, pages 1--6. IEEE,
  2023.

\bibitem[Rockafellar(1970)]{rockafellar1970convex}
RT~Rockafellar.
\newblock Convex analysis.
\newblock \emph{Princeton Math. Series}, 28, 1970.

\bibitem[Shekhar et~al.(2021)Shekhar, Fields, Ghavamzadeh, and
  Javidi]{shekhar2021adaptive}
Shubhanshu Shekhar, Greg Fields, Mohammad Ghavamzadeh, and Tara Javidi.
\newblock Adaptive sampling for minimax fair classification.
\newblock \emph{Advances in Neural Information Processing Systems},
  34:\penalty0 24535--24544, 2021.

\bibitem[Singh et~al.(2017)Singh, Herten, Deschrijver, Couckuyt, and
  Dhaene]{singh2017sequential}
Prashant Singh, Joachim van~der Herten, Dirk Deschrijver, Ivo Couckuyt, and Tom
  Dhaene.
\newblock A sequential sampling strategy for adaptive classification of
  computationally expensive data.
\newblock \emph{Structural and Multidisciplinary Optimization}, 55:\penalty0
  1425--1438, 2017.

\bibitem[Stankiewicz et~al.(2021)Stankiewicz, Tan, and
  Kobilarov]{stankiewicz2021adaptive}
Paul Stankiewicz, Yew~T Tan, and Marin Kobilarov.
\newblock Adaptive sampling with an autonomous underwater vehicle in static
  marine environments.
\newblock \emph{Journal of Field Robotics}, 38\penalty0 (4):\penalty0 572--597,
  2021.

\bibitem[Sui et~al.(2015)Sui, Gotovos, Burdick, and Krause]{sui2015safe}
Yanan Sui, Alkis Gotovos, Joel Burdick, and Andreas Krause.
\newblock Safe exploration for optimization with gaussian processes.
\newblock In \emph{International conference on machine learning}, pages
  997--1005. PMLR, 2015.

\bibitem[Talagrand(1988)]{talagrand1988isoperimetric}
Michel Talagrand.
\newblock An isoperimetric theorem on the cube and the kintchine-kahane
  inequalities.
\newblock \emph{Proceedings of the American Mathematical Society}, pages
  905--909, 1988.

\bibitem[Touchette(2009)]{touchette2009large}
Hugo Touchette.
\newblock The large deviation approach to statistical mechanics.
\newblock \emph{Physics Reports}, 478\penalty0 (1-3):\penalty0 1--69, 2009.

\bibitem[Varadhan(2016)]{varadhan2016large}
SR~Srinivasa Varadhan.
\newblock \emph{Large deviations}, volume~27.
\newblock American Mathematical Soc., 2016.

\bibitem[Von~Luxburg and Sch{\"o}lkopf(2011)]{von2011statistical}
Ulrike Von~Luxburg and Bernhard Sch{\"o}lkopf.
\newblock Statistical learning theory: Models, concepts, and results.
\newblock In \emph{Handbook of the History of Logic}, volume~10, pages
  651--706. Elsevier, 2011.

\bibitem[Zhang and Sukhatme(2007)]{zhang2007adaptive}
Bin Zhang and Gaurav~S Sukhatme.
\newblock Adaptive sampling for estimating a scalar field using a robotic boat
  and a sensor network.
\newblock In \emph{Proceedings 2007 IEEE International Conference on Robotics
  and Automation}, pages 3673--3680. IEEE, 2007.

\end{thebibliography}
\section{Appendix}
\subsection{Topological Vector Space of Real valued Measures}\label{Vector Space of Measures}
Let $\mathfrak{M}(\mathcal{F}, \mathbb{R})$ denote the vector space of of real valued measures on a measurable space, $(\Omega, \mathcal{F})$, where $\big(\Omega, \mathcal{T}(\Omega)\big)$ is a topological space. A topology is needed on $\Omega$ to define continuous functions on the same. There are different topologies that can be associated with this vector space. We define three of them here. 
\begin{definition}
Let $\{\mu_{n}\}_{n \geq 1} \subset \mathfrak{M}(\mathcal{F}, \mathbb{R})$ be a sequence of measures. 
\begin{itemize}
\item $\{\mu_{n}\}_{n \geq 1}$  converges uniformly / in norm / in total variation to $\mu$  if
\begin{align*}
\underset{n \to \infty}{\lim} \;\; \underset{E \in \mathcal{F}}{\sup}\;\; |\mu_{n}(E) - \mu(E)| = 0.
\end{align*}
\item $\{\mu_{n}\}_{n \geq 1}$  converges strongly / pointwise to $\mu$  if $\underset{n \to \infty}{\lim} \;\; \mu_{n}(E) = \mu(E), \forall E \in \mathcal{F}$. 
\item $\{\mu_{n}\}_{n \geq 1}$  converges weakly / in dual sense to $\mu$  if $\underset{n \to \infty}{\lim} \;\; \int_{E}f d\mu_{n} = \int_{E}f d\mu, \forall E \in \mathcal{F}, \; \forall f : \Omega \to \mathbb{R}$, continuous and bounded.  
\end{itemize}
\end{definition}
Let $\mathscr{P}(F) \subset \mathfrak{M}(\mathcal{F}, \mathbb{R})$ denote the set of probability measures on $(\Omega, \mathcal{F})$, i.e., $\mathscr{P}(F) =\{\mu \in \mathfrak{M}(\mathcal{F}, \mathbb{R}): \mu(\varnothing) =0, \mu(\Omega) = 1\}$. When $\Omega = \mathbb{R}^d$ and $\mathcal{F} = \mathcal{B}(\mathbb{R}^d)$, and, $\mu \in  \mathscr{P}(\mathcal{B}(\mathbb{R}^d))$ is absolutely continuous with respect to the Lebesgue measure, $\Lambda$ on $(\mathbb{R}^{d}, \mathcal{B}(\mathbb{R}^d))$, the Radon-Nikodym derivative of $\mu$ with respect to $\Lambda$ is referred to as the density of $\mu$, i.e., $\exists \dfrac{d\mu}{d\Lambda}: \Omega \to \mathbb{R}$, measurable such that, $\mu(E) = \int_{E} \dfrac{d\mu}{d\Lambda}(\omega)d\Lambda(\omega), \forall E \in \mathcal{F}$. 
\subsection{Large Deviations Principle}\label{Large Deviations Principle}
Let $\mathcal{E}$ be an experiment with random outcomes. The sample space for the experiment $\mathcal{E}$, denoted by $\Omega_{\mathcal{E}}$, is the set of all possible outcomes of the experiment. A probability space is defined as a triple $(\Omega, \mathcal{F}, \mathbb{P})$ where: $\Omega$ is a set, $\mathcal{F}$ is a sigma algebra of the subsets of $\Omega$, and $\mathbb{P}$ is a probability measure on $\mathcal{F}$. A random variable on this probability space is measurable function, $f: \Omega \to \mathcal{X}$, where $\mathcal{X}\subset \mathbb{R}^d$.  Usually when it is stated that $(\Omega, \mathcal{F}, \mathbb{P})$ is a probability space, there is no reference to an experiment, it is implicitly understood that $\Omega$ corresponds to the sample space of some experiment. Since the experiment is not explicitly known, $\omega \in \Omega$ has no interpretation. This definition of the probability space is abstract and hence it is not possible to impose additional assumptions on $\Omega$ without knowing the actual experiment, specifically on the topology of  $\Omega$ even if there exists one.

An alternative way to define a probability space would be to first define the experiment. The sample space of the experiment comprises of all the outcomes of the experiment. By suitably defining random variables or observables,  the set of all measurable outcomes of the experiment is obtained, which is denoted by $\Omega$. The set of all events, $\mathcal{F}$, is defined as the collection of all (open) subsets of $\Omega$. The distribution of the events, $\mathbb{P}$, is found by repeating the experiment an arbitrarily large number of times and the calculating the relative frequency of the events. Then the probability space is $(\Omega, \mathcal{F}, \mathbb{P})$. This approach to construct a probability space is followed while solving problems in statistics, especially to analyze the properties of the estimator. In such a scenario, it is reasonable to  make assumptions on the topology of $\Omega$ as long as the outcomes of the particular experiment are compatible with the assumptions being made. The definition of the large deviations principle as mentioned below requires additional assumptions on the sample space described as follows.

Let $(\Omega, d_{\Omega})$ be a polish space, i.e., complete and separable metric space and $\mathcal{F}$ be the Borel sigma algebra of $\Omega$. Let $\mu$ be a default measure on $(\Omega, \mathcal{F})$, for e.g. the Lebesgue measure on $(\mathbb{R}^{d}, \mathcal{B}(\mathbb{R}^d))$. Let $\{\mathcal{P}_{n}\}_{n\geq 1}$ be a sequence of probability measures on $\mathcal{F}$. Typically, the sequence of measures converges weakly to a measure $\mathcal{P}$ which is degenerate, i.e., $\dfrac{d\mathcal{P}}{d\mu} =0$, almost every where. For most of the sets in the sigma algebra, $\mathcal{P}_{n}(A)$ weakly converges to zero. The objective is to characterize the rate of convergence.
\begin{definition}\citep{varadhan2016large}
The sequence of measures $\{\mathcal{P}_{n}\}_{n\geq 1}$ is said to satisfy the large deviations principle (LDP) with rate function $I$, if there exists a  lower semicontinuous function $I:  \to [0,\infty]$ satisfying,
\begin{itemize}
\item for each $l$, the set  $\{x: \mathbb{I}(x) \leq l\}$ is compact,
\item for every closed subset $\bar{E}$ of $\Omega$,  
\begin{align*}
\underset{n \to \infty }\limsup \;\frac{\log \mathcal{P}_{n}(\bar{E})}{n} \leq  -\underset{x \in \bar{E}}{\inf}\; I(x),
\end{align*}
\item for every open subset $O$ of $\Omega$,  
\begin{align*}
\underset{n \to \infty }\liminf \; \frac{\log \mathcal{P}_{n}(O)}{n} \geq  -\underset{x \in O}{\inf}\; I(x).
\end{align*}
\end{itemize}
\end{definition}
Example: let $\{Y_{n}\}$ be a sequence of i.i.d random variables drawn from  a Bernoulli distribution with success probability $p$, i.e, $\mathbb{P}(Y_{n} = 1) = p$ and  $\mathbb{P}(Y_{n} = 0) = 1-p$. Then, $S_{n} = \dfrac{\sum^n_{j=1}Y_j}{n}$ satisfies the LDP with the rate function, 
\begin{align*}
I(x) = x \ln \dfrac{x}{p} +   (1-x) \ln \dfrac{1-x}{1-p}, x \in (0,1).  
\end{align*}
\subsection{Concentration of Measure}\label{Concentration of Measure}
Consider the cone of convex Lipschitz real valued functions on $(\Omega, d_{\Omega})$, i.e., 
\begin{align*}
E_{C^{k,1}} =\Big\{ f: \Omega \to \mathbb{R}: f(t\omega_1 + (1-t)\omega_2) \leq tf(\omega_1) + (1-t)f(\omega_2), \forall t\in [0,1] \\
|f(\omega_1) - f(\omega_2)| \leq c_{f}d_{\Omega}(\omega_1, \omega_2) \Big\}
\end{align*}
Let $\mathbb{M}_{\mathcal{P}_n}[f]$ denote the median of $f$ with respect to $\mathcal{P}_n$.
\begin{definition}\citep{talagrand1988isoperimetric,meckes2012concentration}
\end{definition}
The sequence of measures $\{\mathcal{P}_n\}$  is said to satisfy the the subgaussian convex concentration property (CCP), if there exists sequences , $\{c_{1}(n)\}_{n \geq 1}, \{c_{2}(n)\}_{n \geq 1} \subset \mathbb{R}_{++}$, such that, 
\begin{align*}
\mathcal{P}_{n}(| f - \mathbb{M}_{\mathcal{P}_n}[f] | \geq \epsilon ) \leq c_{1}(n)e^{-\frac{c_{2}(n)}{c_f}\epsilon^2}, \forall f \in E_{C^{k,1}}
\end{align*}
\subsection{Classification Problem}\label{Classification Problem}
Given a set $\mathcal{X} \subset \mathbb{R}^d$, every point in the set is labeled either $1$ or $0$, i.e., there exists a function $f$, such that $f(x) \in \{1,0\}$ is the true label of the point $x \in \mathcal{X}$. Usually $f(\cdot)$ is unknown. Suppose $X$ is drawn at random from $\mathcal{X}$ and the joint distribution between $(X, f(X))$, $\mathbb{P}$, is known. Then to estimate $f(\cdot)$, or the most likely label of $X$,  the following minimization problem is considered. 
\begin{align*}
\underset{f \in L^{1}(\mathcal{X} \times \{-1,0\}, \mathcal{B}(\mathcal{X}), \mathbb{P})}{\min} \; C(f), \; C(f) = \mathbb{E}_{\mathbb{P}}[Y(1-f(X)) + (1-Y)f(X)]
\end{align*}
The classification rule is given by:
\begin{align*}
f^*(x) = \left\{ \begin{aligned} 
  1, \; \text{if} \; \mathbb{P}(Y=1| X=x) \geq 0.5\\
  0, \; \text{if} \; \mathbb{P}(Y=1| X=x) < 0.5
\end{aligned} \right.
\end{align*} 
When the conditional distribution is known and $Y$ is not observed the above classification law is implemented. Suppose $Y$ is also observed, that is, given a finite sequence of observations $\{X_{j}, Y_{j}\}^n_{j \geq 1}$, where $X_{n}(\omega) \in \mathbb{R}^d$, $Y_{n}(\omega) \in \{-1,1\}, \; \forall n$, are i.i.d, with known joint distribution, a model estimation problem can be formulated as follows, 
\begin{align*}
\underset{f \in \mathfrak{F}}{\min}\; \bar{C}(f), \; \bar{C}(f) = \mathbb{E}_{\mathbb{P}}\Big[\sum^{n}_{j=1}Y_{j}(1 -f(X_{j})) + f(X_{j})(1 - Y_{j}) \Big],
\end{align*}
where the function space $\mathfrak{F}$ is to be chosen suitably. When $\mathfrak{F}$ is an RKHS the above problem can be solved using the stochastic representer theorem, \cite{raghavan2023distributed}. When the joint distribution is unknown the expected cost  is replaced by the empirical cost and the optimization problem changes to 
\begin{align*}
\underset{f \in \mathfrak{F}}{\min}\; C_{emp}(f), \; C_{emp}(f) =\frac{1}{n} \sum^{n}_{k=1}\Big[ Y_k(1-f(X_k)) + (1-Y_k)f(X_k) \Big]. 
\end{align*}
In the literature, for e.g. \cite{von2011statistical}, it has been shown using the union bound and the  Hoeffding’s Lemma that, 
\begin{align*}
\mathbb{P}(|C_{emp}(f^*_n) - \bar{C}(f^*)| > \epsilon ) \leq   \mathbb{P}(\underset{f \in \mathcal{F}}{\sup} \;|C_{emp}(f) - C(f)| > \epsilon ) \leq \mathcal{N}(\mathfrak{F},2n)  e^{\frac{-n\epsilon^2}{4}},
\end{align*}
where $\mathcal{N}(\mathfrak{F},2n)$ is the shattering coefficient. The above expression characterizes the rate of decay of the estimation error. To control the rate of decay, one should be able to control the upper bound. Since $\{X_{n}, Y_{n}\}$ is an i.i.d sequence, we note that the R.H.S does not posses any control parameter. Further investigation is needed to characterize the error rate when $\{X_{n}, Y_{n}\}$ is a Markov chain using concentration inequalities. In the following subsection, we use the large deviations principle to bound the estimation error. 
\subsection{Proof of Proposition 1}\label{Proof of Proposition 1}
The standard approach to prove LDP for Bernoulli r.v.s is to use the \textit{Chernoff's} Inequality. Let $\{Y_{n}\}$ be a sequence of i.i.d Bernoulli r.v.s  with parameter $p$. Then, for all $\theta > 0$,
\begin{align*}
\mathbb{P}_{n} \Big(\sum^{n}_{j=1}Y_{j}\geq n\epsilon\Big) \leq \dfrac{e^{\theta \big(\sum^{n}_{j=1}Y_{j}\big)}}{e^{\theta n \epsilon}}= \dfrac{(e^{\theta}p + (1-p))^{n}}{e^{\theta n \epsilon}},
\end{align*}
where the last equality follows from the i.i.d property. Thus, 
\begin{align*}
\frac{1}{n}\ln \mathbb{P}_{n} \Big(\sum^{n}_{j=1}Y_{j}\geq n\epsilon\Big) = - \theta \epsilon +  \ln ( e^{\theta}p + (1-p))  \leq  - \underset{\theta > 0 }{\max} \;\; \Big ( \theta \epsilon - \ln ( e^{\theta}p + (1-p))  \Big). 
\end{align*}
The function is concave in $\theta$ and the maximum is attained at, $\theta^* = \ln \Bigg(\dfrac{\epsilon(1-p)}{p(1-\epsilon)}\Bigg)$. Substituting the same in the above expression, 
\begin{align*}
\frac{1}{n}\ln \mathbb{P}_{n} \Big(\sum^{n}_{j=1}Y_{j}\geq n\epsilon\Big) \leq  \Big (\epsilon \ln \dfrac{p}{\epsilon} +   (1-\epsilon) \ln \dfrac{1-p}{1-\epsilon}\Big)
\end{align*}
For the proof of the ``greater than equal to " inequality we refer to \cite{varadhan2016large}. In the second observation model, the $\{Y_{n}\}$ sequence is not i.i.d. However, we would like to extend the same approach. The main challenge lies in finding the limit of the natural logarithm of the moment generating function. 
\begin{align*}
&\underset{n \to \infty}{\lim} \frac{1}{n}\ln \mathbb{E}_{\mathbb{P}_{n}}\Big[e^{\theta \big(\sum^{n}_{j=1}Y_{j}\big) }\Big] \\
&= \underset{n \to \infty}{\lim} \frac{1}{n}\ln \Bigg[ \sum^{n}_{s=0} \Bigg[ \sum_{\substack{\{y_{j}\}^{n}_{j=1} \in \{0,1\}^n: \\ \sum^{n}_{j=1}y_{j}=s}}\sum_{\{x_{j}\}^{n}_{j=1} \in \mathcal{X}^n} \Big[\prod^m_{k,l=1}\prod^m_{u,v=0,1} \Big(p^{v}_{l}\pi^{l}_{k,u} \Big)^{\eta(k,u,l,v)} \Big]\bar{\mathbb{P}}_{1}(x_{1}, y_{1}) \Bigg] e^{\theta s}\Bigg]\\
&\overset{(a)}{=} \underset{n \to \infty}{\lim} \frac{1}{n}\ln \Bigg[ \sum^{n}_{s=0} \Bigg[ \sum_{\substack{\{y_{j}\}^{n}_{j=1} \in \{0,1\}^n: \\ \sum^{n}_{j=1}y_{j}=s}}\Big(\sum^{m}_{l=1}p_{l}\pi^{*}_{l}\Big)^{s}  \Big(\sum^{m}_{l=1}(1-p_{l})\pi^{*}_{l}\Big)^{n-s}\Bigg] e^{\theta s}\Bigg]\\
&= \underset{n \to \infty}{\lim} \frac{1}{n}\ln \Bigg[ \sum^{n}_{s=0} \Big[ \prescript{n}{}{\mathbf{C}}_{s} e^{\theta s} \Big(\sum^{m}_{l=1}p_{l}\pi^{*}_{l}\Big)^{s}  \Big(\sum^{m}_{l=1}(1-p_{l})\pi^{*}_{l}\Big)^{n-s}\Big]\Bigg]\\
&=  \hspace{-4pt}\underset{n \to \infty}{\lim} \frac{1}{n}\ln \Bigg[ \Big(  e^{\theta s} \Big(\sum^{m}_{l=1}p_{l}\pi^{*}_{l}\Big) + \Big(\sum^{m}_{l=1}(1-p_{l})\pi^{*}_{l}\Big)\Big)^n\Bigg] \hspace{-4pt} = \hspace{-3pt} \ln \Bigg[  e^{\theta s} \Big(\sum^{m}_{l=1}p_{l}\pi^{*}_{l}\Big) + \Big(\sum^{m}_{l=1}(1-p_{l})\pi^{*}_{l}\Big)\Bigg],
\end{align*}
where equality $(a)$ follows from the asymptotic independence property.  Thus, 
\begin{align*}
\frac{1}{n}\ln \mathbb{P}_{n} \Big(\sum^{n}_{j=1}Y_{j}\geq n\epsilon\Big)  \leq - \underset{\theta > 0 }{\max} \;\; \Big ( \theta \epsilon -  \ln \Bigg[  e^{\theta s} \Big(\sum^{m}_{l=1}p_{l}\pi^{*}_{l}\Big) + \Big(\sum^{m}_{l=1}(1-p_{l})\pi^{*}_{l}\Big)\Bigg]  \Big). 
\end{align*}
Using the expression for the i.i.d case with $p= \sum^{m}_{l=1}p_{l}\pi^{*}_{l}$ we get, 
\begin{align*}
\frac{1}{n}\ln \mathbb{P}_{n} \Big(\sum^{n}_{j=1}Y_{j}\geq n\epsilon\Big)  \leq \epsilon \ln \dfrac{\sum^{m}_{l=1}p_{l}\pi^{*}_{l}}{\epsilon} +   (1-\epsilon) \ln \dfrac{1-\sum^{m}_{l=1}p_{l}\pi^{*}_{l}}{1-\epsilon} .
\end{align*}
\subsection{Discussion on Definition 1}\label{Discussion on Definition 1}
We note that $Z_{n} = \{ X_{n}, Y_{n}\}$ is a Markov Chain with transition matrix, $P$, given by, 
\begin{align*}
\begin{bmatrix}
\hspace{-10pt}&p_{1}\pi^{1}_{1,1} &p_{1}\pi^{1}_{2,1} \hspace{-8pt}& \ldots \hspace{-9pt}& p_{1}\pi^{1}_{m,1} &p_{1}\pi^{1}_{1,0} &p_{1}\pi^{1}_{2,0} \hspace{-8pt}&\ldots \hspace{-10pt}& p_{1}\pi^{1}_{m,0}\\
\hspace{-10pt}&p_{2}\pi^{2}_{1,1} &p_{2}\pi^{2}_{2,1} \hspace{-8pt}& \ldots \hspace{-9pt}& p_{2}\pi^{2}_{m,1} &p_{2}\pi^{2}_{1,0} &p_{2}\pi^{2}_{2,0} \hspace{-8pt}& \ldots \hspace{-10pt}& p_{2}\pi^{2}_{m,0}\\
\hspace{-10pt}&\ldots &\ldots \hspace{-8pt}&\ldots \hspace{-9pt}&\ldots &\ldots &\ldots \hspace{-8pt}&\ldots \hspace{-10pt}&\ldots\\
\hspace{-10pt}&p_{m}\pi^{m}_{1,1} &p_{m}\pi^{m}_{2,1} \hspace{-8pt}& \ldots \hspace{-9pt}& p_{m}\pi^{m}_{m,1} &p_{m}\pi^{m}_{1,0} &p_{m}\pi^{m}_{2,0} \hspace{-8pt}& \ldots \hspace{-10pt}& p_{m}\pi^{m}_{m,0}\\
\hspace{-10pt}&(1-p_{1})\pi^{1}_{1,1} &(1-p_{1})\pi^{1}_{2,1} \hspace{-8pt}& \ldots \hspace{-9pt}& (1-p_{1})\pi^{1}_{m,1} & (1-p_{1})\pi^{1}_{1,0} & (1-p_{1})\pi^{1}_{2,0} \hspace{-8pt}&\ldots \hspace{-10pt}&(1-p_{1})\pi^{1}_{m,0}\\
\hspace{-10pt}&(1-p_{2})\pi^{2}_{1,1} &(1-p_{2})\pi^{2}_{2,1} \hspace{-8pt}& \ldots \hspace{-9pt}&(1-p_{2})\pi^{2}_{m,1} &(1-p_{2})\pi^{2}_{1,0} &(1-p_{2})\pi^{2}_{2,0} \hspace{-8pt}&\ldots \hspace{-10pt}&(1-p_{2})\pi^{2}_{m,0}\\
\hspace{-10pt}&\ldots &\ldots \hspace{-8pt}&\ldots \hspace{-9pt}&\ldots &\ldots &\ldots \hspace{-8pt}&\ldots \hspace{-10pt}&\ldots\\
\hspace{-10pt}&(1-p_{m})\pi^{m}_{1,1} &(1-p_{m})\pi^{m}_{2,1} \hspace{-8pt}&\ldots \hspace{-9pt}& (1-p_{m})\pi^{m}_{m,1} &(1-p_{m})\pi^{m}_{1,0} &(1-p_{m})\pi^{m}_{2,0} \hspace{-8pt}&\ldots \hspace{-10pt}&(1-p_{m})\pi^{m}_{m,0}
\end{bmatrix}
\end{align*}
The stationary distribution of the Markov chain can be found by solving the linear equations, $\pi^{*^T} =  \pi^{*^T}P$, where $\pi^{*}$ belongs to the probability simplex in $\mathbb{R}^{2m}$. That is, 
\begin{align*}
&\pi^{*}_{k} = \sum^{m}_{l=1}\pi^{*}_lp_{l}\pi^{l}_{k,1} +  \sum^{m}_{l=1}\pi^{*}_{m+l}p_{l}\pi^{l}_{k,0}, 1 \leq k \leq m\\
&\pi^{*}_{m+k} = \sum^{m}_{l=1}\pi^{*}_l(1-p_{l})\pi^{l}_{k,1} +  \sum^{m}_{l=1}\pi^{*}_{m+l}(1-p_{l})\pi^{l}_{k,0}, 1 \leq k \leq m
\end{align*}
\begin{proposition}
Suppose $\bar{\pi}$ is the stationary distribution of a Markov chain with the transition matrix,
\begin{align*}
\bar{P} = \begin{bmatrix}
&p_{1}\pi^{1}_{1,1} + (1-p_{1})\pi^{1}_{1,0} &p_{1}\pi^{1}_{2,1} + (1-p_{1})\pi^{1}_{2,0} &\ldots &p_{1}\pi^{1}_{m,1} + (1-p_{1})\pi^{1}_{m,0}\\
&p_{2}\pi^{2}_{1,1} + (1-p_{2})\pi^{2}_{1,0} &p_{2}\pi^{2}_{2,1} + (1-p_{2})\pi^{2}_{2,0} &\ldots &p_{2}\pi^{2}_{m,1} + (1-p_{2})\pi^{2}_{m,0}\\
&\ldots &\ldots  &\ldots  &\ldots \\
&p_{m}\pi^{m}_{1,1} + (1-p_{m})\pi^{m}_{1,0} &p_{m}\pi^{m}_{2,1} + (1-p_{m})\pi^{m}_{2,0} &\ldots &p_{m}\pi^{m}_{m,1} + (1-p_{m})\pi^{m}_{m,0}\\
\end{bmatrix}.
\end{align*} 
Then, $\pi^{*}_{k} = p_{k}\bar{\pi}_{k}, \pi^{*}_{m+k} = (1-p_{k})\bar{\pi}_{k}, 1\leq k \leq m$. 
\end{proposition}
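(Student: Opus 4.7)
The plan is to verify by direct computation that the candidate vector $\hat{\pi}$ defined by $\hat{\pi}_k = p_k \bar{\pi}_k$ for $1 \le k \le m$ and $\hat{\pi}_{m+k} = (1-p_k)\bar{\pi}_k$ for $1 \le k \le m$ is a probability vector invariant under $P$, and then invoke uniqueness of the stationary distribution (which holds under the standing assumption that the $X$-chain governed by $\bar{P}$ is irreducible, and hence so is the lifted $(X,Y)$-chain) to conclude $\pi^{*} = \hat{\pi}$.

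First, I would check the probability-vector conditions. Nonnegativity is immediate from $p_k \in [0,1]$ and $\bar{\pi}_k \ge 0$, and the total mass is
\begin{align*}
\sum_{k=1}^m \hat{\pi}_k + \sum_{k=1}^m \hat{\pi}_{m+k} = \sum_{k=1}^m \bar{\pi}_k \big(p_k + (1-p_k)\big) = \sum_{k=1}^m \bar{\pi}_k = 1,
\end{align*}
since $\bar{\pi}$ itself is a probability vector on $\{1,\ldots,m\}$.

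Second, I would verify invariance probabilistically rather than entry-by-entry, which I find cleaner. Reading $\hat{\pi}$ as the law of $(X_n, Y_n)$ with marginal $X_n \sim \bar{\pi}$ and conditional $Y_n \mid X_n = \bar{x}_k$ distributed as $\mathrm{Bernoulli}(p_k)$, one step of the joint chain first draws $X_{n+1}$ from $\pi(X_n, Y_n)[\cdot]$ and then $Y_{n+1}$ from the observation model at $X_{n+1}$. Marginalising out $Y_n$, the induced transition law of $X_{n+1}$ given $X_n = \bar{x}_l$ is
\begin{align*}
\mathbb{P}(X_{n+1} = \bar{x}_k \mid X_n = \bar{x}_l) = p_l\,\pi^l_{k,1} + (1-p_l)\,\pi^l_{k,0} = \bar{P}_{l,k},
\end{align*}
so the marginal of $X_{n+1}$ is $\bar{\pi}^{T} \bar{P} = \bar{\pi}^{T}$ by hypothesis. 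Because $Y_{n+1}$ is conditionally $\mathrm{Bernoulli}(p_k)$ given $X_{n+1} = \bar{x}_k$ and, by the modified observation model, independent of the past once $X_{n+1}$ is known, the joint law of $(X_{n+1}, Y_{n+1})$ is $\bar{\pi}_k$ times the $\mathrm{Bernoulli}(p_k)$ observation weight, which is exactly $\hat{\pi}$.

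The only remaining point is identifying $\hat{\pi}$ with the specific $\pi^{*}$ in the statement. This is where uniqueness enters: under any policy that makes the $X$-chain irreducible, the observation model with $p_k \in (0,1)$ makes the joint chain irreducible as well, giving a unique stationary distribution, so $\pi^{*} = \hat{\pi}$, which reads $\pi^{*}_k = p_k \bar{\pi}_k$ and $\pi^{*}_{m+k} = (1-p_k)\bar{\pi}_k$. The main obstacle is not the algebra but the bookkeeping between source and destination labels in the $2m \times 2m$ matrix $P$; the probabilistic framing sidesteps this by letting the factor $p_k$ attach naturally to the destination $X_{n+1} = \bar{x}_k$ via the observation step, while the factor $p_l$ (respectively $1-p_l$) attaches to the source through the conditional Bernoulli law of $Y_n$ given $X_n = \bar{x}_l$, so the two contributions combine cleanly into $\bar{P}_{l,k}$ inside the weighted sum.
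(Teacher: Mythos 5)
Your proof is correct, and it takes a genuinely different route from the paper's. The paper argues \emph{forward} from the stationarity relation $\bar{\pi}_{k} = \sum_{l}\big(p_{l}\pi^{l}_{k,1}+(1-p_{l})\pi^{l}_{k,0}\big)\bar{\pi}_{l}$: it multiplies by $p_k$ and $1-p_k$, splits each term via $p_k = p_l + (p_k-p_l)$, and then observes that the two candidate equations for $\hat{\pi}_k$ and $\hat{\pi}_{m+k}$ \emph{add up} to the known relation — after which it "assigns" the individual equations. The authors themselves concede the weakness of this step ("there is a lack of uniqueness in expressions satisfied by $\{\hat{\pi}_{k},\hat{\pi}_{m+k}\}$"): only the sum of each pair of equations is actually derived, so the split is postulated rather than proved. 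Your argument closes exactly this gap: by substituting $\hat{\pi}(l,v)=p^{v}_{l}\bar{\pi}_{l}$ into the joint-chain balance equations and factoring out the destination observation weight $p^{v}_{l}$, each of the $2m$ equations individually reduces to the single stationarity equation for $\bar{\pi}$ under $\bar{P}$ — the probabilistic phrasing (one step of the lifted chain preserves the product law because $Y_{n+1}$ depends only on $X_{n+1}$) is just this computation in words, and it also sidesteps the source/destination index bookkeeping that makes the paper's displayed equations hard to audit. The one place where you add content the paper does not have is the final identification $\pi^{*}=\hat{\pi}$ via uniqueness: this requires an irreducibility assumption on the policy (and $p_k\in(0,1)$) that the paper never states. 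That assumption is genuinely needed if $\pi^{*}$ is to denote \emph{the} stationary distribution rather than \emph{a} stationary distribution, so flagging it is a strength of your write-up rather than a defect, but you should state it as an explicit hypothesis since nothing in the paper guarantees it.
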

\begin{proof}
From the stationary property of $\bar{\pi}$, it follows that 
\begin{align*}
&\bar{\pi}_{k} = \sum^{m}_{l=1}\Big(p_{l}\pi^{l}_{k,1} + (1-p_{l})\pi^{l}_{k,0}\Big)\bar{\pi}_{l} \\
&\implies p_{k}\bar{\pi}_{k} = \sum^{m}_{l=1}\big(p_{l}\bar{\pi}_{l}\big)p_{k}\pi^{l}_{k,1} + \sum^{m}_{l=1} \big( (1-p_{l})\bar{\pi}_{l} \big) p_{k}\pi^{l}_{k,0}, \\
&(1 -p_{k})\bar{\pi}_{k} = \sum^{m}_{l=1}\big(p_{l}\bar{\pi}_{l}\big)(1-p_k)\pi^{l}_{k,1}+ \sum^{m}_{l=1} \big((1-p_{l})\bar{\pi}_{l}\big)(1-p_k) \pi^{l}_{k,0}. \\
&\implies p_{k}\bar{\pi}_{k} = \sum^{m}_{l=1}\big(p_{l}\bar{\pi}_{l}\big)p_{l}\pi^{l}_{k,1} +  \sum^{m}_{l=1}\big(p_{l}\bar{\pi}_{l}\big)\big(p_k-p_{l}\big)\pi^{l}_{k,1}  + \sum^{m}_{l=1} \big( (1-p_{l})\bar{\pi}_{l} \big) p_{l}\pi^{l}_{k,0} \\  &+\sum^{m}_{l=1} \big( (1-p_{l})\bar{\pi}_{l} \big) \big(p_{k}-p_{l}\big)\pi^{l}_{k,0},\\
&(1 -p_{k})\bar{\pi}_{k} = \sum^{m}_{l=1}\big(p_{l}\bar{\pi}_{l}\big)(1-p_l)\pi^{l}_{k,1}+ \sum^{m}_{l=1}\big(p_{l}\bar{\pi}_{l}\big)(p_l -p_k)\pi^{l}_{k,1} + \sum^{m}_{l=1} \big((1-p_{l})\bar{\pi}_{l}\big)(1-p_l) \pi^{l}_{k,0} \\
& + \sum^{m}_{l=1} \big((1-p_{l})\bar{\pi}_{l}\big)(p_l-p_k) \pi^{l}_{k,0}. \\
&\implies p_{k}\bar{\pi}_{k}  + (1 -p_{k})\bar{\pi}_{k} =  \underbrace{\sum^{m}_{l=1}\big(p_{l}\bar{\pi}_{l}\big)p_{l}\pi^{l}_{k,1}  + \sum^{m}_{l=1} \big( (1-p_{l})\bar{\pi}_{l} \big) p_{l}\pi^{l}_{k,0}} + \\
&\hspace{6cm} \underbrace{\sum^{m}_{l=1}\big(p_{l}\bar{\pi}_{l}\big)(1-p_l)\pi^{l}_{k,1} +  \sum^{m}_{l=1} \big((1-p_{l})\bar{\pi}_{l}\big)(1-p_l) \pi^{l}_{k,0} }
\end{align*}
If we define, $\hat{\pi}_{k} = p_{k}\bar{\pi}_{k},\; 1\leq k \leq m$ and $\hat{\pi}_{m+k} = (1 -p_{k})\bar{\pi}_{k}, \;  1\leq k \leq m$, and, we assign, 
\begin{align*}
&\hat{\pi}_{k} = \sum^{m}_{l=1}\hat{\pi}_{l}p_{l}\pi^{l}_{k,1}  + \sum^{m}_{l=1}\hat{\pi}_{m+l} p_{l}\pi^{l}_{k,0},\; 1\leq k \leq m \\
&\hat{\pi}_{m+k} = \sum^{m}_{l=1}\hat{\pi}_{l}(1-p_l)\pi^{l}_{k,1} +  \sum^{m}_{l=1} \hat{\pi}_{m+l} (1-p_l) \pi^{l}_{k,0} , \; 1\leq k \leq m ,
\end{align*}
we retrieve the equations satisfied by the stationary distribution $\pi^{*}$. 
\end{proof}
We note that in the above proof, there is a lack of uniqueness is expressions satisfied by $\{\hat{\pi}_{k}, \hat{\pi}_{m+k}\}$. The expressions in R.H.S could be modified by adding terms in the expression for $\hat{\pi}_{k}$ and subtracting the same terms in the expression of $\hat{\pi}_{m+k}$.   
\begin{proposition}
If there exists $\bar{\pi}$ such that $\pi^*_{k} = p_{k}\bar{\pi}_{k}$ and $\pi^*_{k} =(1- p_{k})\bar{\pi}_{k}$, then $\bar{\pi}$ is the stationary distribution of a Markov chain with transition matrix $\bar{P}$. 
\end{proposition}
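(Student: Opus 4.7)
The plan is to establish $\bar\pi^{T} = \bar\pi^{T}\bar P$ directly from the stationarity equations for $\pi^{*}$ with respect to $P$, by exploiting the structural decomposition $\pi^{*}_{k}=p_{k}\bar\pi_{k}$, $\pi^{*}_{m+k}=(1-p_{k})\bar\pi_{k}$. This runs the bookkeeping of the preceding proposition in reverse, so I do not expect any real obstacle; the only conceptual step is to notice that adding, rather than subtracting, the two blocks of stationarity equations collapses the $p_{l}$ and $1-p_{l}$ coefficients into unity.

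First, I would recall the system
\begin{align*}
\pi^{*}_{k} &= \sum_{l=1}^{m}\pi^{*}_{l}\,p_{l}\,\pi^{l}_{k,1} + \sum_{l=1}^{m}\pi^{*}_{m+l}\,p_{l}\,\pi^{l}_{k,0},\\
\pi^{*}_{m+k} &= \sum_{l=1}^{m}\pi^{*}_{l}\,(1-p_{l})\,\pi^{l}_{k,1} + \sum_{l=1}^{m}\pi^{*}_{m+l}\,(1-p_{l})\,\pi^{l}_{k,0},
\end{align*}
which must hold because $\pi^{*}$ is stationary for $P$. Adding the two equations gives
\begin{align*}
\pi^{*}_{k}+\pi^{*}_{m+k} = \sum_{l=1}^{m}\pi^{*}_{l}\,\pi^{l}_{k,1} + \sum_{l=1}^{m}\pi^{*}_{m+l}\,\pi^{l}_{k,0}.
\end{align*}
Next, I would substitute the hypothesized decomposition. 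The left-hand side becomes $p_{k}\bar\pi_{k}+(1-p_{k})\bar\pi_{k}=\bar\pi_{k}$, while the right-hand side becomes
\begin{align*}
\sum_{l=1}^{m} p_{l}\bar\pi_{l}\,\pi^{l}_{k,1} + \sum_{l=1}^{m}(1-p_{l})\bar\pi_{l}\,\pi^{l}_{k,0} = \sum_{l=1}^{m}\bar\pi_{l}\bigl(p_{l}\pi^{l}_{k,1}+(1-p_{l})\pi^{l}_{k,0}\bigr) = \sum_{l=1}^{m}\bar\pi_{l}\,\bar P_{l,k}.
\end{align*}
Thus $\bar\pi_{k}=\sum_{l}\bar\pi_{l}\bar P_{l,k}$ for every $k\in\{1,\dots,m\}$, which is precisely the stationarity equation $\bar\pi^{T}=\bar\pi^{T}\bar P$.

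To finish, I would check that $\bar\pi$ is an admissible probability vector. Nonnegativity is immediate, since $\bar\pi_{k}=\pi^{*}_{k}+\pi^{*}_{m+k}\geq 0$, and normalization follows from
\begin{align*}
\sum_{k=1}^{m}\bar\pi_{k} = \sum_{k=1}^{m}\bigl(p_{k}\bar\pi_{k}+(1-p_{k})\bar\pi_{k}\bigr) = \sum_{k=1}^{m}\pi^{*}_{k}+\sum_{k=1}^{m}\pi^{*}_{m+k}=1,
\end{align*}
because $\pi^{*}$ lies in the probability simplex of $\mathbb{R}^{2m}$. Hence $\bar\pi$ is the stationary distribution of the Markov chain with transition matrix $\bar P$. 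The only subtlety worth flagging is the degenerate case $p_{k}\in\{0,1\}$, where $\bar\pi_{k}$ is still well-defined as $\pi^{*}_{k}+\pi^{*}_{m+k}$ and the argument is unaffected; the decomposition given in the hypothesis automatically handles this.
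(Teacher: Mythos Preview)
Your proof is correct and follows essentially the same route as the paper: both arguments substitute the decomposition $\pi^{*}_{k}=p_{k}\bar\pi_{k}$, $\pi^{*}_{m+k}=(1-p_{k})\bar\pi_{k}$ into the stationarity equations for $P$ and add the two blocks so that the $p_{l}$ and $1-p_{l}$ factors collapse, yielding $\bar\pi_{k}=\sum_{l}\bar\pi_{l}\bar P_{l,k}$. The only cosmetic differences are that you add before substituting (which slightly streamlines the algebra) and that you additionally verify $\bar\pi$ lies in the simplex, a check the paper omits.
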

\begin{proof}
Substituting $\pi^*_{k} = p_{k}\bar{\pi}_{k}$ and $\pi^*_{k} =(1- p_{k})\bar{\pi}_{k}$ in the equations corresponding to stationarity of $P$
\begin{align*}
&p_{k}\bar{\pi}_{k} = \sum^{m}_{l=1}p_{l}\bar{\pi}_{l}p_{l}\pi^{l}_{k,1} +  \sum^{m}_{l=1}(1- p_{l})\bar{\pi}_{l}p_{l}\pi^{l}_{k,0}, 1 \leq k \leq m,\\
&(1- p_{k})\bar{\pi}_{k} = \sum^{m}_{l=1}p_{l}\bar{\pi}_{l}(1-p_{l})\pi^{l}_{k,1} +  \sum^{m}_{l=1}(1- p_{l})\bar{\pi}_{l}(1-p_{l})\pi^{l}_{k,0}, 1 \leq k \leq m.
\end{align*}
Adding the above two expressions, we get, 
\begin{align*}
\bar{\pi}_{k}  &= \sum^{m}_{l=1} \big(p^{2}_l + p_{l} - p^{2}_l\big)\bar{\pi}_{l}\pi^{l}_{k,1} +  \sum^{m}_{l=1} \big(p_{l} -p^{2}_{l} + 1- 2p_l + p^{2}_{l}  \big)\bar{\pi}_{l}\pi^{l}_{k,0}\\
&= \sum^{m}_{l=1} p_{l}\bar{\pi}_{l}\pi^{l}_{k,1} +  \sum^{m}_{l=1} (1-p_l)\bar{\pi}_{l}\pi^{l}_{k,0}=  \sum^{m}_{l=1}\Big(  p_{l}\pi^{l}_{k,1} + (1-p_l)\pi^{l}_{k,0}\Big)\bar{\pi}_{l}
\end{align*}
\end{proof}
The asymptotic independence property of the joint distribution can be interpreted as follows. For arbitrarily large $n$, the joint probability of $\{Y_{j}=y_j\}^{n}_{j=1}$ can be computed by using the stationary distribution of the Markov chain and numbers of $1$'s and $0$'s in $\{y_j\}^{n}_{j=1}$. 
\subsection{Pseudo-code for Classification Algorithm and Proof of Proposition 5} \label{Safety Algorithm}
The pseudo code for the classification algorithm is presented in Algorithm \ref{Algorithm 2}. In the first part of the algorithm, from lines $1-12$, the threshold is computed. The threshold based classification algorithm is described between lines $13-19$. Below we present the proof of Proposition \ref{Proposition 5}. The proof relies on the idea that after a finite number of visits to all the regions, the agent is able to estimate the parameter corresponding  to each region to a precision where the index corresponding to the threshold is identified and does not change for future iterations. Thus, the label corresponding to each agent is identified and does not change with further iterations. The further iterations are only to improve the  accuracy of the parameter estimates for the safe regions. 
\begin{algorithm}
\caption{Safety Classification Algorithm}
\begin{algorithmic}[1]\label{Algorithm 2}
\STATE Given $\{c_{k}\}^m_{k=1}$ in nondecreasing order, $\max \gets 0$, $k \gets 1$, 
\WHILE {$k \leq m-1 $}
\IF {$c_{k+1} - c_{k} > \max $}
\STATE $\max \gets c_{k+1} - c_{k} $, $k^* \gets k$
\ENDIF
\STATE $k \gets k+1  $
\ENDWHILE
\IF {$k^* \in  [\lceil \frac{m}{4}\rceil, \lceil \frac{3m}{4}\rceil] $}
\STATE $c_{T} = c_{k^*}$
\ELSE
\STATE $c_{T} = median(\{c_{l}\}^{m}_{l=1})$ 
\ENDIF
\WHILE {$k \leq m$}
\IF    {$ c_{k} \leq c_{T}$ and $p_{k} \geq \frac{1}{2}$}
\STATE $LABEL[k] \gets SAFE$
\ELSE 
\STATE $LABEL[k] \gets UNSAFE$
\ENDIF
\ENDWHILE
\end{algorithmic}
\end{algorithm}
\begin{proof}
By the strong law of large numbers, it follows that there exists $N_{\omega, \delta}$ such that $|p_{k,n} - p^*_k|< \delta, \forall n \geq N_{\omega, \delta}, k \in \{1, \ldots, m\}$, i.e., after a finite number of visits to all the cells the estimate of the probability of observing $1$ at a given cell becomes $\delta$ close to the true probability for every cell. Suppose $\delta$ is smaller than $|p^{*}_{k}(1-p^*_k) - p^{*}_{l}(1-p^*l)|$ for any $k,l$. Then $\{(p_{k, N_{\omega}})(1- p_{k, N_{\omega}})\}^{m}_{k=1}$ can be sorted in ascending order and the threshold $c_{T}$ can be obtained. From the definition of the thresholds, it follows that that index corresponding to the threshold does not change after $N_{\omega,\delta}$. Hence, the set $\mathcal{X}_s$ obtained at $N_{\omega, \delta}$ remains invariant for all future iterations and states of the MDP remain in this set for all $n \geq N_{\omega, \delta}$.  
\end{proof}
\end{document}